\documentclass[11pt,reqno]{amsart}
\usepackage{amssymb}
\usepackage{amsfonts}
\usepackage{amsmath}
\usepackage{stmaryrd}
\usepackage{physics}
\usepackage{braket}
\usepackage{dsfont}
\usepackage{bbold}
\usepackage{graphicx}
\usepackage{relsize}
\usepackage{makecell}
\usepackage[table,xcdraw]{xcolor}
\usepackage{enumerate}
\usepackage[pagebackref, colorlinks = true, linkcolor = blue, urlcolor  = blue, citecolor = red]{hyperref}
\usepackage[margin=1in]{geometry}
\usepackage{enumitem}
\usepackage{mathtools}
\usepackage{graphbox}
\usepackage{comment}
\usepackage{float}
\usepackage[font=scriptsize]{caption}

\usepackage[capitalize]{cleveref}

\renewcommand{\epsilon}{\varepsilon}

\newtheorem{theorem}{Theorem}[section]
\newtheorem{proposition}[theorem]{Proposition}
\newtheorem{corollary}[theorem]{Corollary}
\newtheorem{lemma}[theorem]{Lemma}

\newtheorem*{conjecture*}{Conjecture}

\theoremstyle{definition}
\newtheorem{definition}[theorem]{Definition}
\theoremstyle{definition}
\newtheorem{remark}[theorem]{Remark}
\theoremstyle{definition}
\newtheorem{example}[theorem]{Example}
\theoremstyle{definition}
\newtheorem{question}
[theorem]{Question}

\newcommand\vertarrowbox[3][6ex]{%
  \begin{array}[t]{@{}c@{}} #2 \\
  \left\uparrow\vcenter{\hrule height #1}\right.\kern-\nulldelimiterspace\\
  \makebox[0pt]{\scriptsize#3}
  \end{array}%
}

\definecolor{darkgreen}{rgb}{0,0.392,0}

\newcommand{\la}{\langle}
\newcommand{\ra}{\rangle}

\newcommand{\A}{\mathcal{A}}
\newcommand{\B}{\mathcal{B}}

\newcommand{\Le}{\mathcal{L}}

\newcommand{\Real}{\mathbb{R}}
\newcommand{\Comp}{\mathbb{C}}

\newcommand{\id}{\mathrm{id}}

\newcommand{\sch}{\mathsf{SN}}
\newcommand{\SR}{{\rm SR}}
\newcommand{\SN}{{\rm SN}}
\newcommand{\Sep}{\mathsf{Sep}}

\newcommand{\ppt}{\mathsf{PPT}}

\newcommand{\PO}{\mathcal{POS}}
\newcommand{\CP}{\mathcal{CP}}
\newcommand{\SP}{\mathcal{SP}}
\newcommand{\DSP}{\mathcal{DSP}}
\newcommand{\EB}{\mathcal{EB}}
\newcommand{\PPT}{\mathcal{PPT}}
\newcommand{\DEC}{\mathcal{DEC}}

\newcommand{\UND}{\mathcal{UND}}

\newcommand{\Ad}{{\rm Ad}}

\newcommand{\proj}[1]{|#1\rangle\langle #1|}

\newcommand{\vect}{{\rm vec\,}}

\author{Sang-Jun Park}
\email{sjpark@whu.edu.cn}
\address{School of Mathematics and Statistics, Wuhan University, 
430070, Wuhan, Hubei, China}

\title[Every PPT map has finite EB index] {Every PPT channel has finite entanglement breaking index}

\begin{document}

\begin{abstract}
We prove that every PPT linear map has finite entanglement breaking index, thereby establishing the eventual entanglement breaking property of PPT channels in full generality. Furthermore, by utilizing completely positive maps with low entanglement dimensionality, we show that a large family of PPT maps, which strictly contains the class of 2-superpositive maps, has entanglement breaking index bounded above by 3, uniformly in the dimension. In particular, these results provide evidence that the PPT-cubed conjecture may hold in full generality.
\end{abstract}

\maketitle

\tableofcontents

\section{Introduction}
{
In quantum information theory, entanglement breaking (EB) channels represent a fundamental threshold beyond which quantum communication becomes effectively classical: a quantum channel $\Phi$ is entanglement breaking if $(\operatorname{id}\otimes\Phi)(\rho)$ is separable for every bipartite state $\rho$, or equivalently if its Choi matrix is separable \cite{HSR03}. For a channel $\Phi:M_d\to M_d$, its entanglement breaking index is defined by
    $$n_{\rm EB}(\Phi):=\inf \{n\geq 1:\Phi^n\ {\rm is\ entanglement\ breaking}\},$$
with $n_{\rm EB}(\Phi)=\infty$ when no such $n$ exists. This notion was first introduced and systematically studied by Lami and Giovannetti \cite{LG15,LG16} as a measure of the lifetime of entanglement under repeated applications of the same noise. Operationally, if one half of an entangled state is subjected successively to $\Phi$, then $n_{\rm EB}(\Phi)$ is the first time at which the resulting dynamics necessarily destroys its entanglement with any external reference system. 

A particularly important instance of this problem is provided by channels with positive partial transpose (PPT), namely completely positive maps $\Phi$ for which $\top\circ\Phi$ is also completely positive. The \emph{PPT-squared conjecture}, proposed by Christandl \cite{PPTsq} and subsequently studied in various works, predicts that PPT noise destroys entanglement after only two successive uses:
\begin{center}
    $\Phi$ is PPT \quad $\implies$ \quad $\Phi^2$ is entanglement breaking,
\end{center}
or equivalently $n_{\rm EB}(\Phi)\leq2$. One may ask more generally whether the composition of two PPT maps is necessarily entanglement breaking, which turns out to be equivalent to the above weaker assertion \cite{CMHW19}. Besides its intrinsic relation to the geometry of PPT and separable cones, this problem has a natural interpretation through entanglement swapping \cite{CF17}: PPT-entangled states may be regarded as bound-entangled links of a quantum network, and composition of their associated maps corresponds to joining such links by an intermediate swapping operation. If the resulting map is entanglement breaking, no entanglement can be established between the terminal systems through this procedure, which is closely related to limitations on PPT-based quantum repeaters and long-distance quantum key distribution \cite{BCHW15,AL26}.

Although the PPT-squared conjecture remains substantially stronger, the weaker question of whether the EB index of a PPT map is finite has gradually become accessible. The first affirmative result in this direction was presented by Kennedy, Manor, and Paulsen \cite{KMP18} who showed that
the iterates of every unital or trace-preserving PPT map approach the
entanglement breaking cone asymptotically. Subsequently, Rahaman, Jaques, and
Paulsen \cite{RJP18} proved that every unital PPT channel has finite entanglement breaking
index. This result was later extended to any faithful PPT channels, and more generally to PPT
completely positive maps having full-rank left and right Perron-Frobenius
eigenmatrices, by Hanson, Rouz\'{e}, and Stilck Fran\c{c}a \cite{HRSF20}. On the side of the PPT-squared conjecture, the qutrit case $d=3$
was fully proved independently by Christandl, M\"uller-Hermes, and Wolf and by
Chen, Yang, and Tang \cite{CMHW19,CYT19}. Further positive
results cover Choi-type maps \cite{SN22}, random covariant models
\cite{NP26}, symplectic and Cartan-covariant maps \cite{Par26,Pru25}, and strengthened qutrit composition statements
\cite{AL26}. Finally, a recent work proves that every PPT composition through a
qubit intermediate algebra is entanglement breaking \cite{Henry26}.
}

The contribution of the present paper is twofold. First, we establish the finite entanglement breaking index property for \emph{arbitrary} PPT channels, removing the full-rank eigenmatrices assumptions required in previous results. Second, by combining low-rank Kraus factorizations with composition mechanisms through qubit and qutrit intermediate systems, we derive explicit dimension-independent bounds on the EB index under additional low-Schmidt-number assumptions.

\subsection{Eventual entanglement breaking property of PPT channels}

In our first main theorem, we remove all the above restrictions and show that every PPT quantum channel, and more generally every PPT linear map, has finite EB index.

\begin{theorem}\label{thm:intro-main}
Let $\Phi:M_d\to M_d$ be a completely positive and completely copositive linear map. Then
    $$n_{\rm EB}(\Phi)<\infty.$$
Equivalently, there is an integer $n_0=n_0(\Phi)$ such that $\Phi^n$ is entanglement breakingfor every $n\geq n_0$
\end{theorem}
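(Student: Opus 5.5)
The plan is to reduce to the situation already handled by Hanson, Rouz\'e and Stilck Fran\c{c}a \cite{HRSF20}, using the structure of positive maps near their spectral radius. Here ``EB'' means entanglement breaking, and the spectral radius of $\Phi$ is written $r(\Phi)$.

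\textbf{Normalization.} If $r(\Phi)=0$, then $\Phi$ is nilpotent, so $\Phi^n=0$ for $n\ge d^2$ and there is nothing to prove. Otherwise, since positive rescalings preserve both the PPT and the EB cones, we may assume $r(\Phi)=1$.

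\textbf{Perron--Frobenius supports and compression.} By Perron--Frobenius theory for positive maps there are nonzero $X,Y\ge0$ with $\Phi(X)=X$ and $\Phi^*(Y)=Y$. The first step is to choose $X$ and $Y$ of maximal support, and to call $P$ and $Q$ the support projections of the largest $\Phi$-invariant and $\Phi^*$-invariant hereditary faces on which the spectral radius $1$ is attained.

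Complete positivity gives invariance of these faces: $\Phi(PM_dP)\subseteq PM_dP$, and dually for $Q$. Hence $\Phi$ has a block-triangular form with respect to $M_d=PM_dP\oplus(\text{rest})$. I would then write
\[
\Phi^n=\Psi_n+R_n,
\]
where $\Psi_n$ is built from the compression $\Phi_{QP}:=\Ad_Q\circ\Phi\circ\Ad_P$ (or its restriction to the corner between the two supports), and $R_n$ collects the terms passing through parts of spectral radius $<1$. Since those parts decay, $\|R_n\|\le C\lambda^n$ for some $\lambda<1$, up to polynomial factors coming from Jordan blocks at modulus $1$.

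\textbf{Applying \cite{HRSF20} to the corner.} On the corner, the compressed map is again completely positive and completely copositive, because compressions by $\Ad$ preserve both properties. Its left and right Perron eigenmatrices are full rank by construction. The argument of \cite{HRSF20} then shows that its peripheral limit points $E=\lim_k\Phi_{QP}^{n_k}$ are EB: PPT forces the peripheral automorphism algebra to be commutative, as in \cite{KMP18,RJP18}. Moreover, the Choi matrix of $E$ lies in the relative interior of the separable cone of the corresponding face, namely operators supported on $\mathbb{C}^d\otimes Q$ (and the transpose side by $P$). So every sufficiently large power $\Psi_n$ has Choi matrix uniformly inside that face's separable cone, with a margin bounded below.

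\textbf{Absorbing the remainder (main obstacle).} It remains to show that the perturbation $R_n$ does not push the Choi matrix of $\Phi^n$ outside the separable cone. The difficulty is that $R_n$ may have components off the face, coupling the transient block to the recurrent block, where $\Psi_n$ provides no interior margin.

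My first attempt would use the factorization $\Phi^{n}=\Phi^{m}\circ\Phi^{n-m}$ with $m$ fixed and large, together with the triangular structure. Every transient contribution eventually enters the recurrent block, so each off-face term takes the form $\Phi_{\rm rec}^{m}\circ(\text{something small})$. Because $\Phi_{\rm rec}^{m}$ is close to EB with a uniform interior margin, and a composition with an EB map on either side is EB, the coupling terms can be grouped together with a slice of $\Psi_n$ into manifestly separable pieces. Terms that remain within the transient block are of size $O(\lambda^n)$; these I would dominate by a small multiple of $\Psi_n$'s interior mass on the full support. This requires checking that the support of each remaining term is contained in the support of the dominant term, which I expect to follow from the maximality of $P$ and $Q$. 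If this domination fails, the fallback is induction on $d$: peel off the transient part and apply the statement to strictly smaller invariant corners.
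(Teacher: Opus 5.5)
There is a genuine gap, and it sits exactly at the step you flagged. It cannot be closed by a perturbative absorption argument, for three reasons.

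\emph{No margin.} The contributions to $\Phi^n$ from inputs in the kernel of the right Perron eigenmatrix, and their coupling into the recurrent block, have Choi matrices on input directions where $C_{\Psi_n}$ vanishes identically. So there is no interior point of a separable face to absorb them, and maximality of $P,Q$ does not give the support inclusion you need.

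\emph{No decay.} The complementary corner can have the full spectral radius, because its Perron eigenmatrix need not lift to one of $\Phi$. Take $\Phi=\Ad_{|1\ra\la 1|}+\Ad_{|1\ra\la 2|}+\Ad_{|2\ra\la 2|}$ on $M_2$, with spectral radius $1$. Its right and left Perron eigenmatrices are unique up to scaling, namely $|1\ra\la 1|$ and $|2\ra\la 2|$. With your $P=|1\ra\la 1|$ and $Q=|2\ra\la 2|$ this gives $\Ad_Q\circ\Phi\circ\Ad_P=0$, while $\Phi^n(|2\ra\la 2|)=n|1\ra\la 1|+|2\ra\la 2|$ grows linearly. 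This $\Phi$ is EB, but putting a faithful PPT non-EB channel in each of the three slots gives non-EB examples with the same pattern. So in general you have neither a corner with full-rank two-sided Perron data nor a bound $\|R_n\|\le C\lambda^n$ with $\lambda<1$.

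\emph{No termwise EB.} The block-triangular form of $\Phi$ as a linear operator on $M_d$ splits $\Phi^n$ into summands that are not CP. So you cannot conclude EB termwise from ``composition with an EB map is EB'', and ``close to EB'' is not EB anyway.

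The missing idea is where complete copositivity enters beyond \cite{HRSF20}. Let $\Phi(\rho)=\lambda\rho$ with $\lambda>0$, let $\Pi_\rho$ be the support projection of $\rho$, and let $u\in{\rm Ran}\,\rho$. Apply $\id_2\otimes\Phi$ \emph{and} $\top_2\otimes\Phi$ to the rank-one block matrix built from $(u,w)$, and use the Schur-complement range condition. This shows that both $\Phi(|u\ra\la w|)$ and its adjoint have range in ${\rm Ran}\,\rho$, i.e.\ $\Phi(|u\ra\la w|)=\Pi_\rho\Phi(|u\ra\la w|)\Pi_\rho$. Complete positivity alone controls only the left support and leaves the block $\Pi_\rho M_d\Pi_\rho^\perp$ free.

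With this two-sided control, write $C_\Phi=\begin{pmatrix}C_{11}&C_{12}\\ C_{12}^*&C_{22}\end{pmatrix}$ along $({\rm Ran}\,\rho\otimes\Comp^d)\oplus({\rm Ker}\,\rho\otimes\Comp^d)$ and split off the Schur complement $C_{22}-C_{12}^*C_{11}^{-1}C_{12}$. This gives an \emph{exact} decomposition $\Phi=\A+\B$ into CP maps with $\A=\Ad_{\Pi_\rho}\circ\A$ and $\B=\B\circ\Ad_{\Pi_\rho^\perp}$. Hence $\B\circ\A=0$ and $\Phi^n=\sum_{j=0}^n\A^{n-j}\circ\B^j$, with every summand CP.

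Moreover, $\A^p$ and $\B^p$ factor through the $(p-1)$-st powers of the compressions $\Phi_\rho$ and $\Phi_{\rho^\perp}$ of $\Phi$ to ${\rm Ran}\,\rho$ and ${\rm Ker}\,\rho$. These are PPT maps on smaller matrix algebras, so induction on $d$ yields $n_{\rm EB}(\Phi)\le n_{\rm EB}(\Phi_\rho)+n_{\rm EB}(\Phi_{\rho^\perp})+1$. A singular left eigenmatrix is handled by passing to $\Phi^*$, and \cite{HRSF20} is invoked only when both Perron eigenmatrices are full rank. No decay or margin estimate is needed, and the transient corner may well have full spectral radius.

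Your fallback ``induction on $d$'' has the right shape. Without this exact CP splitting, though, EB of the smaller corners cannot be transferred back to $\Phi^n$.
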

As mentioned earlier, the full-support eigenmatrices case of Theorem~\ref{thm:intro-main} is derived from
\cite{HRSF20}.  The missing case is that in which a right or left Perron-Frobenius
eigenmatrix $\rho\in M_d$ is singular. As the key intermediate steps of the proof of Theorem \ref{thm:intro-main}, we first construct explicit decompositions of $\Phi$ into sum of two CP maps $\A,\B:M_d\to M_d$ where $\A$ and $\B$ are supported on the Hilbert subspaces ${\rm Ran}(\rho)$ and its orthogonal completement ${\rm Ran}(\rho)^{\perp}$, respectively (Theorem \ref{thm:Perron-splitting-full}). This in particular allows us to estimate $n_{\rm EB}(\Phi)$ explicitly in terms of $n_{\rm EB}(\A)$ and $n_{\rm EB}(\B)$. As another crucial step, we show that the entanglement dimensionality (i.e., Schmidt number \cite{TH00}) of the PPT map $\Phi$ is governed by the rank of the eigenstate $\rho$ (Theorem \ref{thm:singular-support-SP}):
    $$\Phi\in \PPT \implies  \SN(C_{\Phi})\leq \max(\rank \rho, d-\rank \rho).$$
In particular, when $\rho$ is singular, this implies that $\SN(C_{\Phi})\leq d-1$ and hence we can conclude the proof using the induction hypotheses.

\subsection{Absolute bounds on the EB index via low-entanglement decompositions}
Although Theorem \ref{thm:intro-main} guarantees that all PPT channels have finite EB index, it is still unknown whether there is a channel-independent bound of EB index. 
Our second result provides a wide family of PPT maps whose entanglement breaking index is absolutely bounded. First, let $\SP_k$ be the cone of $k$-superpositive
maps, equivalently CP maps whose Choi matrix has the Schmidt number at most $k$ \cite{SSZ09}. We further enlarge these classes by considering the cone sum
    $$\DSP_k:=\{\Psi_1+\top\circ \Psi_2:\Psi_1,\Psi_2\in \SP_k\}.$$
We first emphasize that $\DSP_k$ is nontrivial and may contain non-CP maps whenever $k\geq 2$. Indeed, we have the inclusion chain
    $$\EB = \DSP_1 = \SP_1 \subsetneq \DSP_{2}\subseteq \cdots \subseteq \DSP_{d-1}\subsetneq \DSP_d = \DEC.$$
The two strict inclusions above are verified in Example \ref{ex:DSPex}. For example, the reduction map 
    $$R(X):=\Tr(X)I_d-X, \quad X\in M_d$$
belongs to the cone $\DSP_2$ while $R$ is not even $2$-positive. We also refer to Example \ref{ex:DSP-Symp} for other nontrivial examples of PPT maps in $\DSP_2 \setminus \SP_2$. Thus, the class $\DSP_2$ already contains a substantial family of positive maps. 

Our results connecting these classes to the associated EB index are summarized as follows.

\begin{theorem}\label{thm:intro-low-bridge}
Let $\Phi:M_d\to M_d$ be a PPT map.
\begin{enumerate}
    \item If $\Phi\in \DSP_2$, then $n_{\rm EB}(\Phi)\leq 3$, i.e., $\Phi^3\in \EB$.
    
    \item If $\Phi\in \DSP_3$, then $n_{\rm EB}(\Phi)\leq 5$.
\end{enumerate}
\end{theorem}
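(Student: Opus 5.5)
We reduce both statements to known "composition through a small intermediate system" results: every PPT composition through a qubit intermediate algebra is entanglement breaking \cite{Henry26}, and the qutrit PPT-squared results \cite{CMHW19,CYT19,AL26} give the analogous statement through a qutrit. The link between $\SP_k$ and intermediate dimension is the low-rank Kraus factorization: $\Psi\in\SP_k$ iff $\Psi=\sum_i \Ad_{V_i}$ with each $\rank V_i\le k$. Each such term factors as $\Ad_{V_i}=\Ad_{B_i}\circ\Ad_{A_i}$ with $A_i:\Comp^d\to\Comp^k$ and $B_i:\Comp^k\to\Comp^d$. Thus every summand of $\Phi\in\DSP_k$ is, up to a possible transpose (which is a $*$-anti-automorphism and can be absorbed), a CP map passing through $M_k$.

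\textbf{Step 1 (expand the power).} Write $\Phi=\Psi_1+\top\circ\Psi_2$ with $\Psi_j\in\SP_k$. Since $\EB$ is a convex cone, it suffices to show that every cross term of $\Phi^n$ is $\EB$. We do not expand all factors. In $\Phi^3=\Phi\circ\Phi\circ\Phi$ we expand only the middle factor:
\[
\Phi^3=\sum_i \Phi\circ\Theta_i\circ\Phi ,
\]
where each $\Theta_i$ is either $\Ad_{V_i}$ or $\top\circ\Ad_{V_i}$ with $\rank V_i\le 2$. Factor $\Theta_i=\Ad_{B_i}\circ\Theta_i'\circ\Ad_{A_i}$ through $M_2$, moving the transpose to the middle.

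\textbf{Step 2 (PPT composition through $M_2$).} Group the term as $(\Phi\circ\Ad_{B_i}\circ\Theta_i')\circ(\Ad_{A_i}\circ\Phi)$. The left factor maps $M_2\to M_d$, and the right factor maps $M_d\to M_2$. Both are PPT: $\Phi$ is CP and coCP, precomposing or postcomposing with $\Ad$ or with $\top$ preserves PPT, and $\top$ merely swaps CP and coCP. Hence each term is a composition of two PPT maps through a qubit intermediate algebra, so it is $\EB$ by \cite{Henry26}. This gives $\Phi^3\in\EB$, that is, part (1).

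\textbf{Step 3 (the $\DSP_3$ case).} The same factorization now passes through $M_3$, and the qutrit statement requires a composition of two PPT maps through $M_3$. Concretely, I would use the result of \cite{CMHW19,AL26} in the form "$\Psi_2\circ\Psi_1\in\EB$ whenever $\Psi_1:M_d\to M_3$ and $\Psi_2:M_3\to M_d$ are PPT". If only the square form for maps on $M_3$ is available, I would expand two middle factors instead: in $\Phi^5=\Phi\,\Phi\,\Phi\,\Phi\,\Phi$, expand the 2nd and 4th factors through $M_3$. The portion between the two bottlenecks, $\Ad_{A}\circ\Phi\circ\Ad_{B}$, is then a PPT map $M_3\to M_3$, and the outer pieces also map into and out of $M_3$. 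This reorganizes the term as $L\circ(\Theta\circ\Xi)\circ R$, which is exactly the count producing the exponent $5$. One then needs the qutrit PPT-squared theorem, applied to a product of two PPT maps on $M_3$, to yield $\EB$. Finally, $\EB$ is preserved under composition with arbitrary CP maps on either side.

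\textbf{Main obstacle.} The delicate point is the placement of the transposes and of the non-square factors in Step 3. The qutrit theorem is stated for compositions of two PPT maps, and every PPT map $M_3\to M_3$ arising here must be genuinely PPT, with no stray $\top$ left over. I would handle this by checking that each bracket contains an odd or even number of transposes consistently: $\top\circ\Ad_V\circ\top=\Ad_{\bar V}$ lets us move transposes freely, so each bracket is PPT irrespective of the transposes.
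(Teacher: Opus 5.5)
Part (1) of your proposal is correct and is the paper's argument (Theorem~\ref{thm:DSP-bridge}(1) with all three maps equal to $\Phi$). Part (2), however, has a genuine gap.

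Your first option needs $\Psi_2\circ\Psi_1\in\EB$ for PPT maps $\Psi_1:M_d\to M_3$ and $\Psi_2:M_3\to M_d$. That is not what \cite{CMHW19,CYT19,AL26} prove: their qutrit results are for maps on $M_3$ itself. After embedding $M_3\subset M_d$, your claim is an instance of the PPT-squared problem on $M_d$, and only the qubit bottleneck is available, via \cite{Henry26}. Were it available, your part (1) argument would already give $n_{\rm EB}\le 3$ on $\DSP_3$, not $5$.

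Your fallback does not repair this. Expanding the 2nd and 4th factors of $\Phi^5$ gives terms $L\circ M\circ R$ with $L:M_3\to M_d$, $M:M_3\to M_3$ and $R:M_d\to M_3$, all PPT. There is exactly one square qutrit map here, so no product $\Theta\circ\Xi$ of two PPT maps on $M_3$ ever appears. The outer factors $L,R$ are precisely the rectangular pieces your first option needed.

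The paper closes the argument with an ingredient you do not use: every PPT map on $M_3$ is $2$-superpositive \cite{YLT16,CYT19}. So $M$ has Kraus operators of rank at most $2$. Then $L\circ M\circ R$ splits into PPT compositions through $M_2$, each entanglement breaking by \cite{Henry26}. This is part (1) in its rectangular form (Remark~\ref{rem:DSPgeneral}).

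Alternatively, your square-only strategy does work if you expand the 1st, 3rd and 5th factors instead. Write each expanded term as $\Ad_{B_m}\circ\tau_m\circ\Ad_{A_m}$ with $A_m\in M_{3,d}$, $B_m\in M_{d,3}$ and $\tau_m\in\{\id_3,\top\}$. A typical term of $\Phi^5$ is then $\Ad_{B_5}\circ\Theta\circ\Xi\circ\Ad_{A_1}$, where $\Theta=\tau_5\circ\Ad_{A_5}\circ\Phi\circ\Ad_{B_3}\circ\tau_3$ and $\Xi=\Ad_{A_3}\circ\Phi\circ\Ad_{B_1}\circ\tau_1$ are PPT maps on $M_3$. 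The square qutrit theorem \cite{CMHW19,CYT19} gives $\Theta\circ\Xi\in\EB$, and the outer CP maps preserve this. This variant uses only the square qutrit result. It proves EB-composability of $(\DSP_3,\PPT,\DSP_3,\PPT,\DSP_3)$ rather than the paper's $(\PPT,\DSP_3,\PPT,\DSP_3,\PPT)$, and both yield $n_{\rm EB}(\Phi)\le 5$.
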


We refer to Theorem \ref{thm:DSP-bridge} for stronger, fully compositional statements. In particular, we show the cone inclusion
\begin{equation} \label{eq:SP2-bridge}
    \PPT \circ \DSP_2 \circ \PPT \subseteq \EB,
\end{equation}
and part (1) of Theorem \ref{thm:intro-low-bridge} follows directly from this inclusion. We make two additional comments on the cones $\DSP_k$. First, the identity map $\id_d$ is does not belong to any $\DSP_k$ unless $k=d$ (Example \ref{ex:DSPex}). Thus, the PPT-squared conjecture does not follow from Eq. \eqref{eq:SP2-bridge} and remains open. Second, to the best of our knowledge, it is unknown whether the cone $\DSP_2$ contains all PPT maps when $d\geq 4$ while the inclusion $\PPT(3,3)\subseteq \SP(3,3)\subseteq \DSP_2(3,3)$ was derived in \cite{YLT16}. This inclusion is equivalent to the dual statement (Proposition \ref{prop:DSP_PPT_equiv})
\begin{equation} \label{eq:2biPTDec}
    \Phi \text{ is $2$-positive and $2$-copositive} \stackrel{?}{\implies}  \Phi\in \DEC,
\end{equation}
We remark that a stronger open question posed in \cite{DMS23} asks whether every 
$2$-entanglement breaking map is decomposable. Therefore, Eq. \eqref{eq:2biPTDec} is a natural intermediate question whose affirmative answer would directly imply the the PPT-cubed conjecture.

\subsection{Organization of the paper}

{
Section~\ref{sec:prelim} review various notions and preliminary results in entanglement theory. Section~\ref{sec:splitting} provides a general method for decomposing a PPT map into the sum of two CP maps whose EB indices govern the EB index of the original map, in terms of the associated Perron–Frobenius eigenstate. By utilizing this decomposition we present the proof of Theorem \ref{thm:intro-main} in  Section~\ref{sec:eventual}. Finally, Section~\ref{sec:bridges} develops EB-composability through the cones $\DSP_k$ and proves the absolute three- and five-step bounds Theorem \ref{thm:intro-low-bridge}.
}

\section{Preliminaries on quantum entanglement} \label{sec:prelim}

In this section, we briefly review basic notions in quantum entanglement theory used throughout this paper. Let $M_d:=M_d(\Comp)$ denote the algebra of $d\times d$ complex matrices. We use Dirac's \textit{bra-ket} notations: column vectors $v \in \mathbb{C}^d$ are written as kets $\ket{v}$ and their conjugate transpose $v^* \in (\Comp^d)^*$ are written as bras $\bra{v}$.
The standard \emph{inner product} is denoted by $\langle v | w \rangle:=v^*w\in \Comp$ and the \emph{outer product} (rank-one operator) by $|v\ra\la w|:=vw^*\in M_d$. The standard basis in $\Comp^d$ is denoted by $\{\ket{j}:=e_j\}_{j=1}^d$, and we write $\ket{ij} := \ket{i} \otimes \ket{j}$ to denote tensor products of standard basis vectors. We allow \emph{unnormalized} quantum states: a \emph{state} is simply a positive operator $\rho\in B(H)^+$ on a finite-dimensional Hilbert space $H$. The unit-trace condition is immaterial for the homogeneous cone-theoretic definitions below.

Let $H_A=\Comp^{d_A}$ and $H_B=\Comp^{d_B}$ be finite-dimensional Hilbert spaces. Any nonzero bipartite vector $\xi\in H_{AB}:=H_A\otimes H_B$ admits a \textit{Schmidt decomposition} \cite{NiCh} 
    $$|\xi\ra=\sum_{i=1}^k{\lambda_i}|v_i\ra\otimes |w_i\ra$$
where $\lambda_1\geq \cdots \geq \lambda_k>0$, and $\{v_i\}_{i=1}^k$ and $\{w_i\}_{i=1}^k$ are orthonormal subsets in $H_A$ and $H_B$, respectively. The integer $k$ and the coefficients $\{\lambda_i\}_{i=1}^k$ are uniquely determined, and we call $k$ the \textit{Schmidt rank} of $\xi$ and write ${\rm SR}(|\xi\ra)=k$. For a natural number $k$, define
    $$\sch_k(d_A\otimes d_B):={\rm conv}\{|\xi\ra\la \xi|: \xi\in H_{AB},\;\; {\rm SR}(|\xi\ra)\leq k\}\subset B(H_{AB})^+$$
and write simply $\sch_k$ when no confusion arises. Then the \textit{Schmidt number} of a quantum state $\rho_{AB}\in (M_{d_A}\otimes M_{d_B})^+$ is defined as the least $k$ for which $\rho_{AB}\in \sch_k$, and we denote it by ${\rm SN}(\rho_{AB})=k$. By definition,
    $$\sch_k=\{\rho\in B(H_{AB})^+:SN(\rho)\leq k\},$$
and $\sch_1\subset \sch_2\subset \sch_3\subset\cdots$. Note that $1\leq \SN(\rho_{AB})\leq \min(d_A,d_B)$ and $\sch_k=B(H_{AB})^+$ whenever $k\geq \min(d_A,d_B)$. The cone $\sch_1(d_A\otimes d_B)$ is precisely the cone $\Sep(d_A\otimes d_B)$ of \textit{separable} states; states outside the set $\Sep$ are called \textit{entangled}.

Another important cone is $\ppt(d_A\otimes d_B)$, consisting of the states with \textit{positive-partial-trasnpose (PPT)}, namely the operators $\rho_{AB}\in B(H_{AB})^+$ satisfying
    $$\rho_{AB}^{\Gamma}:=(\id_A\otimes \top_B)(\rho_{AB})\geq 0.$$
The Perez-Horodecki- criterion \cite{Per96,HHH96} is reformulated as $\Sep\subseteq \ppt$, providing a simple yet still powerful necessary condition for separability. In particular, in $2\otimes2$ and $2\otimes3$ systems, this condition is also sufficient for separability \cite{stormer1963positive,Wor76}.

\medskip

A linear map $\Phi:M_{d_A}\to M_{d_B}$ is called \emph{positive} if $\Phi(M_{d_A}^+)\subseteq M_{d_B}^+$. For an integer $k\geq 1$, $\Phi$ is called \emph{$k$-positive} if its ampliation $\id_k\otimes \Phi:M_k\otimes M_{d_A}\to M_k\otimes M_{d_B}$ is positive. It is \emph{completely positive} (CP) if it is $k$-positive for every $k$, and \emph{completely copositive} (co-CP) if $\top\circ\Phi$ is CP, where $\top:X\mapsto X^{\top}$ denotes matrix transposition. Finally, $\Phi$ is \emph{decomposable} if it is the sum of a CP map and a co-CP map. We denote by
    $$\PO(d_A,d_B), \quad \PO_k(d_A,d_B), \quad \CP(d_A,d_B), \quad \DEC(d_A,d_B) $$
the cones of positive, $k$-positive, CP, and decomposable maps from $M_{d_A}$ to $M_{d_B}$, respectively; we often omit the dimensions when they are clear. It is well-known that the complete positivity of $\Phi$ is equivalent to the $\min(d_A,d_B)$-positivity \cite{Cho75a}. 

The Choi--Jamio{\l}kowski isomorphism provides an efficient way to analyze the entanglement properties of linear maps \cite{Jam72,Cho75a}. 
For a linear map $\Phi:M_{d_A}\to M_{d_B}$, we use the (unnormalized) \emph{Choi matrix}
    $$C_{\Phi}=\sum_{i,j=1}^{d_A} |i\ra\la j|\otimes\Phi(|i\ra\la j|)
 \in M_{d_A}\otimes M_{d_B}.$$
Choi's theorem \cite{Cho75a} says that $\Phi$ is completely positive if and only if
$C_{\Phi}\geq 0$. 
We call $\Phi$ \emph{PPT} (resp. \emph{entanglement breaking; EB}) when $C_{\Phi}\in \ppt$ (resp. $C_{\Phi} \in \Sep$). Let us denote the corresponding cones of maps from $M_{d_A}$ to $M_{d_B}$ by $\PPT(d_A,d_B)$ and $\EB(d_A,d_B)$, respectively. Since $C_{\top \circ \Phi} = C_{\Phi}^{\Gamma}$, $\Phi$ is PPT if and only if $\Phi$ is CP and co-CP. These notions admit the equivalent characterizations \cite{HSR03,TextHol} 
\begin{align*}
    \Phi\in \PPT(d_A,d_B) \quad &\iff \quad \forall\,k\geq 1, \;\;(\id_k\otimes \Phi)((M_{k}\otimes M_{d_A})^+)\subseteq \ppt(k\otimes d_B),\\
    \Phi\in \EB(d_A,d_B) \quad &\iff \quad \forall\,k\geq 1, \;\;(\id_k\otimes \Phi)((M_{k}\otimes M_{d_A})^+)\subseteq \Sep(k\otimes d_B),
\end{align*}
which justifies the terminology. Furthermore, both $\PPT(d_A,d_B)$ and $\EB(d_A,d_B)$ are \emph{mapping cones} \cite{Sto86,GKS21,Kye23}, i.e., convex cones of the positive linear maps satisfying the $\CP$-bimodule property: for $\mathcal{C}=\PPT$ or $\EB$,
\begin{equation} \label{eq:MappingCone}
    \Psi_l\circ \Phi\circ \Psi_r\in \mathcal{C} \text{ whenever $\Phi\in \mathcal{C}$ and $\Psi_l,\Psi_r$ are CP}.
\end{equation}

We also recall the Schmidt-number mapping cones studied extensively in \cite{SSZ09}. We first write $\Ad_K: X\mapsto KXK^*$ the elementary CP map associated with a matrix $K$.
\begin{definition}\label{def:SP}
A CP map $\Phi:M_{d_A}\to M_{d_B}$ is called \emph{$k$-superpositive} if it
has a Kraus representation
    $$\Phi=\sum_\alpha\Ad_{K_\alpha}$$
for some $d_B\times d_A$ matrices $K_{\alpha}$ such that $\rank K_\alpha\leq k$. The corresponding cone is denoted by $\SP_k(d_A,d_B)$.
\end{definition}

Under vectorization, we have $C_{\Ad_K}=|\vect K^{\top}\ra\la \vect K^{\top}|$ where
    $$|\vect X\ra:= \sum_{i=1}^{d_B} (X|i\ra) \otimes |i\ra = \sum_{i,j} X_{ij}|ij\ra \in \Comp^{d_A}\otimes \Comp^{d_B}, \quad X=(X_{ij})_{1\leq i\leq d_A, 1\leq j\leq d_B}\in M_{d_A,d_B}.$$
Note that
$\SR(|\vect K^{\top}\ra)=\rank K^{\top} = \rank K$, and   therefore, $\Phi\in \SP_k$ if and only if it is CP and $\SN(C_{\Phi})\leq k$.
In particular, $\EB=\SP_1$ and we have further equivalent characterizations
\begin{align*}
    \Phi\in \SP_k(d_A,d_B) \quad &\iff \quad \forall\,n\geq 1, \;\;(\id_{n}\otimes \Phi)((M_{n}\otimes M_{d_A})^+)\subseteq \sch_k(n\otimes d_B).
\end{align*}
As advertised earlier, each $\SP_k$ is a mapping cone, that is, a convex cone satisfying Eq. \eqref{eq:MappingCone}. Furthermore, all the cones $\PPT, \EB$, and $\SP_k$ are invariant under Hilbert--Schmidt adjoints $\Phi\mapsto \Phi^*$ where $\Phi^*:M_{d_B}\to M_{d_A}$ is the linear map characterized by
    $$\Tr\big(Y^*\Phi(X)\big) = \Tr\big((\Phi^*(Y))^*X\big), \quad X\in M_{d_A},\;\; Y\in M_{d_B}.$$

For a CP map $\Phi:M_d\to M_d$, we define its \emph{entanglement breaking index} by
    $$n_{\rm EB}(\Phi) :=\inf\{n\geq 1:\Phi^n\in\EB\}.$$
If no such $n$ exists, set $n_{\rm EB}(\Phi)=\infty$. If $n_{\rm EB}<\infty$, then the mapping cone property implies that $\Phi^n\in \EB$ for every $n\geq n_{\rm EB}$, in which case $\Phi$ is called \emph{eventaully entanglement breaking}.

\medskip

For our main focus, we recall the following finite-dimensional Perron--Frobenius theory for positive maps; see, for example, \cite{EHK78,HRSF20} and \cite[Chapter 6]{Wol12}.
Every linear map $\Phi:M_d\to M_d$ has exactly $d^2$ complex eigenvalues (counting algebraic multiplicity), and the largest absolute value
    $$r_{\rm sp}(\Phi):=\max\{|\lambda|:\exists\, X\in M_d \text{ such that $X\neq 0$ and $\Phi(X)=\lambda X$}\}$$
is called the \emph{spectral radius} of $\Phi$. 

\begin{theorem}\label{thm:PF}
Let $\Phi:M_d\to M_d$ be a positive map and let
$\lambda=r_{\rm sp}(\Phi)$.  Then $\lambda$ is an eigenvalue of $\Phi$,
and there are nonzero positive semidefinite matrices $\rho,X\geq 0$ such that
    $$\Phi(\rho)=\lambda\rho, \quad \Phi^*(X)=\lambda X.$$
\end{theorem}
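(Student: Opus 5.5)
The plan is to prove the statement for $\Phi$ by a resolvent argument that uses only positivity, and then to apply the same argument to $\Phi^*$. Note that $\Phi^*$ is again positive, since $\Tr(Y\Phi^*(Z))=\Tr(\Phi(Y)Z)\geq 0$ for $Y,Z\geq 0$, and that it has the same spectral radius. Throughout, let $\lambda=r_{\rm sp}(\Phi)$, and for real $t>\lambda$ consider the resolvent
$$R(t):=(t\,\id-\Phi)^{-1}=\sum_{n\geq 0}\frac{\Phi^n}{t^{n+1}}.$$
The Neumann series converges because $t>r_{\rm sp}(\Phi)$. Every term is a positive map, so $R(t)$ is positive, and in particular $R(t)(I_d)\geq 0$.

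\textbf{Step 1 (norm control through the identity).} For a positive map $\Psi$ the operator norm satisfies $\|\Psi\|_{\infty\to\infty}=\|\Psi(I_d)\|_\infty$ (Russo--Dye for positive maps). Apply this to $\Psi=\Phi^n$, and use that $\|Y\|_\infty\leq \Tr Y$ for $Y\geq 0$ together with additivity of the trace on the PSD cone. For every complex $z$ with $|z|>\lambda$ this gives
$$\|R(z)\|\leq \sum_{n\geq0}\frac{\|\Phi^n(I_d)\|_\infty}{|z|^{n+1}}\leq \sum_{n\geq 0}\frac{\Tr\Phi^n(I_d)}{|z|^{n+1}}=\Tr\big(R(|z|)(I_d)\big).$$

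\textbf{Step 2 (blow-up at $t=\lambda$).} Pick an eigenvalue $\mu$ of $\Phi$ with $|\mu|=\lambda$. Since $z\mapsto R(z)$ is a rational function with a pole at $\mu$, we have $\|R(s\mu/\lambda)\|\to\infty$ as $s\downarrow\lambda$; when $\lambda=0$, take $z\to0$ instead. By Step~1, $\tau(t):=\Tr(R(t)(I_d))\to\infty$ as $t\downarrow\lambda$. This is the crucial point and the only place where some care is needed. It converts the existence of \emph{some} peripheral eigenvalue into divergence along the \emph{positive real} axis, and it does so exactly because Step~1 dominates the complex resolvent by a positive quantity evaluated at $|z|$.

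\textbf{Step 3 (extracting the eigenvector).} Set $X_t:=R(t)(I_d)/\tau(t)$. Each $X_t$ is positive semidefinite with unit trace, and it satisfies
$$(t\,\id-\Phi)(X_t)=\frac{I_d}{\tau(t)}.$$
By compactness of the unit-trace PSD matrices, choose $t_k\downarrow\lambda$ with $X_{t_k}\to\rho$. Then $\rho\geq0$, $\Tr\rho=1$ (so $\rho\neq0$), and passing to the limit gives $\Phi(\rho)=\lambda\rho$. In particular $\lambda$ itself is an eigenvalue of $\Phi$.

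Running the same three steps for $\Phi^*$ produces a nonzero $X\geq0$ with $\Phi^*(X)=\lambda X$. The argument needs no separate treatment of the degenerate case $\lambda=0$ (nilpotent $\Phi$). Alternatively, in that case one can take the last nonzero iterate $\Phi^k(I_d)$: every $0\leq Y\leq cI_d$ satisfies $0\leq\Phi^{k+1}(Y)\leq c\,\Phi^{k+1}(I_d)=0$, so $\Phi^{k+1}(I_d)=0$ forces $\Phi(\Phi^k(I_d))=0$.
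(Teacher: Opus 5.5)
Your resolvent argument is correct: positivity of the Neumann series for real $t>\lambda$, the Russo--Dye bound $\|R(z)\|\leq \Tr\big(R(|z|)(I_d)\big)$ (which transfers the blow-up at a peripheral eigenvalue to the positive real axis), and a normalized limit of $R(t)(I_d)$ give the PSD eigenvector, and the same argument applies to the positive map $\Phi^*$. The paper does not prove this statement itself but quotes it from \cite{EHK78,HRSF20} and \cite[Chapter 6]{Wol12}, and your proof is essentially the standard resolvent proof found in those references.
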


Most importantly, the following full-support case, which will be needed later, is derived in \cite[Theorem 3.14]{HRSF20}.

\begin{theorem}\label{thm:HRSF-full-support}
Let $\Phi:M_d\to M_d$ be a PPT map such that there are strictly
positive matrices $\rho,X>0$ satisfying
$$
 \Phi(\rho)=r_{\rm sp}(\Phi)\rho,
 \qquad
 \Phi^*(X)=r_{\rm sp}(\Phi)X.
$$
Then $n_{\rm EB}(\Phi)<\infty$.
\end{theorem}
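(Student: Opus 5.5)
The statement is the full-support case of \cite[Theorem 3.14]{HRSF20}. The plan is to reduce to a unital channel with a faithful invariant state, identify the asymptotic part of its iterates using the PPT property, and then upgrade asymptotic entanglement breaking to entanglement breaking at a finite time.

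\emph{Step 1 (normalization).} If $r_{\rm sp}(\Phi)=0$, then $\Phi$ is nilpotent. Hence $\Phi^{d^2}=0\in\EB$ and there is nothing to prove. Otherwise we rescale so that $r_{\rm sp}(\Phi)=1$ and set
$$\Psi:=\Ad_{\rho^{-1/2}}\circ\Phi\circ\Ad_{\rho^{1/2}}.$$
Because $\rho>0$, both $\Ad_{\rho^{\pm1/2}}$ are invertible CP maps. By the mapping cone property \eqref{eq:MappingCone}, $\Psi$ is PPT. Moreover $\Psi^n=\Ad_{\rho^{-1/2}}\Phi^n\Ad_{\rho^{1/2}}$, so $\Psi^n\in\EB$ if and only if $\Phi^n\in\EB$, and $n_{\rm EB}(\Psi)=n_{\rm EB}(\Phi)$. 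The map $\Psi$ is unital. Its adjoint fixes $\sigma:=\rho^{1/2}X\rho^{1/2}>0$, so $\Psi$ has a faithful invariant state.

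\emph{Step 2 (peripheral structure).} For a unital Schwarz map with faithful invariant state, the standard theory (see \cite[Chapter 6]{Wol12}) gives the following. The peripheral eigenvalues are semisimple roots of unity. Choose $p$ so that every peripheral eigenvalue of $\Psi^p$ equals $1$. Then $\Psi^{pn}\to P$, where $P$ is the projection onto the fixed-point space. The fixed-point space is a $*$-algebra $\bigoplus_k M_{d_k}\otimes I_{m_k}$ (up to a unitary), and $P$ has the form $P(Y)=\bigoplus_k \Tr_{m_k}\!\big((I\otimes\omega_k)\,Q_kYQ_k\big)\otimes I_{m_k}$ with $\omega_k>0$.

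Since $P$ is a limit of PPT maps, $P$ is PPT. Restricting $P$ to the block $M_{d_k}\otimes I_{m_k}$, it acts there as the identity on $M_{d_k}$. The identity on $M_{d_k}$ is co-CP only when $d_k=1$, so the PPT property forces $d_k=1$ for all $k$. Hence $P(Y)=\sum_k \Tr(F_kY)E_k$ with $F_k,E_k\geq0$, so $P\in\EB$. This shows that $\Psi$ is asymptotically entanglement breaking.

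\emph{Step 3 (finite time: the main obstacle).} Convergence $\Psi^{pn}\to P$ with $P\in\EB$ does not by itself give $\Psi^{pn}\in\EB$, because $C_P$ need not lie in the interior of $\Sep$. The approach I would try first is a relative-interior argument.

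From the block form, $C_P=\sum_k F_k^{\top}\otimes E_k$, where the $E_k=Q_k$ are mutually orthogonal projections summing to $I$ and $F_k$ is full rank on $Q_k$. I would try to show that $C_{\Psi^{pn}}$ has its range inside $\mathrm{Ran}(C_P)$ for all large $n$. The tools would be the invariance $\Psi(Q_kM_dQ_k)\subseteq Q_kM_dQ_k$ together with faithfulness of $\sigma$, which kills the off-diagonal blocks $Q_kM_dQ_l$ for $k\neq l$ in the limit. They may survive at finite $n$, in which case I would instead write $\Psi^{pn}=P+R_n$ with $PR_n=R_nP=0$.

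On each block $\mathrm{Ran}(F_k^{\top})\otimes\mathrm{Ran}(E_k)$, the product $F_k^{\top}\otimes E_k$ is full rank. By the Gurvits--Barnum ball around the maximally mixed state, any Hermitian perturbation of sufficiently small norm supported in that block remains separable. The off-diagonal remainder would be handled by composing once more: use $\Psi^{pn+N}=\Psi^{pn}\circ\Psi^{N}$, so that the cross terms are damped geometrically while the block-diagonal part stays in the separable ball. Making this support/perturbation bookkeeping rigorous is where I expect the real work to lie.
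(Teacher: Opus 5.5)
The paper gives no proof of this statement; it imports it from \cite[Theorem 3.14]{HRSF20}. Your outline therefore has to stand on its own, and Step 3, which carries the actual content, does not.

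Write $W_k=\mathrm{Ran}(Q_k)$ and $V_k=\mathrm{Ran}(\overline{Q_k})$, so that $\mathrm{Ran}(C_P)=\bigoplus_k V_k\otimes W_k$. A product vector lying in this subspace must lie in a single summand $V_k\otimes W_k$. Hence every separable operator with range in $\mathrm{Ran}(C_P)$ is block diagonal for this decomposition. Since $Q_k$ lies in the multiplicative domain of the unital CP map $\Psi^{p}$, the range inclusion $\mathrm{Ran}(C_{\Psi^{pn}})\subseteq\mathrm{Ran}(C_P)$ is automatic. Moreover, the $(k,l)$ off-diagonal block of $C_{\Psi^{pn}}$ is exactly the Choi matrix of $Y\mapsto\Psi^{pn}(Q_kYQ_l)$. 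So if these cross terms merely decay geometrically, no power of $\Psi$ is ever EB, however small they become. A nontrivial dephasing channel on $M_2$ illustrates this: it has a faithful invariant state and an abelian peripheral algebra, yet none of its powers is EB. Your fallback of composing once more to damp the cross terms therefore cannot work. A Gurvits--Barnum perturbation argument only applies once you have the exact identity $\Psi^{p}(Q_kM_dQ_l)=0$ for $k\neq l$, which is a second, essential use of PPT that your proposal does not contain.

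This identity is easy to supply. $\Psi^p$ and $\top\circ\Psi^p$ are both unital CP maps, and they send $Q_k$ to the projections $Q_k$ and $Q_k^{\top}$ respectively. So $Q_k$ lies in both multiplicative domains. For $Z=\Psi^p(Q_kYQ_l)$ this gives $Z=Q_k\Psi^p(Y)Q_l$ and $Z=Q_l\Psi^p(Y)Q_k$, whence $Z=Q_kZ=Q_kQ_l\Psi^p(Y)Q_k=0$. Equivalently, apply Proposition~\ref{prop:two-sided-support} to $\Psi^p$ with $\rho=Q_k$ and with $\rho=Q_l$.

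It follows that $\Psi^{pn}(Y)=\sum_k\Psi_k^n(Q_kYQ_k)$, where $\Psi_k=\Ad_{Q_k}\circ\Psi^p\circ\Ad_{Q_k}$ has $Q_k$ as its only peripheral eigenvector. Thus $\Psi_k^n\to\Tr(F_k\,\cdot\,)Q_k$, whose Choi matrix $F_k^{\top}\otimes Q_k$ is full rank on $V_k\otimes W_k$ by faithfulness of $\sigma$. Applying the Gurvits--Barnum ball blockwise then gives $\Psi^{pn}\in\EB$ for large $n$.

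There is also a smaller problem in Step 2. The peripheral eigenvalues of a unital CP map with faithful invariant state need not be roots of unity: take $\Ad_U$ with $U=\mathrm{diag}(1,e^{i\theta})$ and $\theta/\pi\notin\mathbb{Q}$. So you cannot choose $p$ before using PPT. Instead, pass to a subsequence $\Psi^{n_j}\to P$, the projection onto the peripheral space, and derive $d_k=1$ from the complete copositivity of $P$. Only then observe that $\Psi$ acts on the abelian peripheral algebra by permuting the minimal projections $Q_k$, which produces $p$.
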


{
The class considered in this paper is broader than the class of \emph{quantum channels}, that is, CP and trace-preserving (TP) maps. We note that a nonzero TP or unital positive map $\Phi$ cannot be \emph{nilpotent}, i.e., $\Phi^n\neq 0$ for any $n\geq 1$, as any such map has an eigenvalue 1 \cite{Wol12}. On the other hand, nilpotent PPT maps occur once normalization is dropped. This distinction is relevant to the zero-spectral-radius branch of the proof of Theorem~\ref{thm:intro-main}. In the following, we present two examples of PPT nilpotent maps which might be of independent interest.
}

\begin{example}
For $d\geq2$, let us consider
    $$\mathcal{N}(X):=\sum_{j=1}^{d-1}\la j+1|X|j+1\ra \,|j\ra\la j| =\sum_{j=1}^{d-1}\Ad_{|j\rangle\langle j+1|}(X), \quad X\in M_d$$
 
Every Kraus operator has rank one, so $\mathcal{N}_d$ is entanglement breaking. Moreover,
    $$\mathcal{N}^m(X) =\sum_{j=1}^{d-m}\langle j+m|X|j+m\rangle\, |j\ra\la j|,
 \quad 1\leq m\leq d,$$
and hence $\mathcal{N}^d=0$, while $\mathcal{N}^{d-1}(\proj{d})=\proj{1}\neq0$.
\end{example}

We also provide an example of PPT non-EB nilpotent map obtained from \emph{unextendible product basis}.

\begin{example} 
Let us consider the five real product vectors in $\Comp^3\otimes\Comp^3$
\begin{align*}
 |\psi_1\ra=|1\ra\otimes\frac{|1\ra-|2\ra}{\sqrt{2}}, \quad |\psi_2\ra&=\frac{|1\ra-|2\ra}{\sqrt{2}}\otimes|3\ra, \quad |\psi_3\ra=|3\ra\otimes\frac{|2\ra-|3\ra}{\sqrt{2}}, \quad |\psi_4\ra=\frac{|2\ra-|3\ra}{\sqrt{2}}\otimes|1\ra,\\
 |\psi_5\ra&=\frac{|1\ra+|2\ra+|3\ra}{\sqrt{3}}
 \otimes\frac{|1\ra+|2\ra+|3\ra}{\sqrt{3}}.
\end{align*}
They form the TILES unextendible product basis \cite{BDMSS99}, and it is shown that
    $$\rho_{\rm UPB} :=\frac{1}{4}\Big(I_3^{\otimes 2}-\sum_{j=1}^5|\psi_j\ra\la \psi_j|\Big)$$
is PPT and entangled . Let $\Pi:=|1\ra\la 1|+|2\ra\la 2| + |3\ra\la 3|\in M_6$ be the projection onto ${\rm span}\{|1\ra,|2\ra,|3\ra\}\subset\Comp^6$, let $\Pi^{\perp}=I_6-\Pi$, and choose coordinate isometries $V_{\Pi},V_{\Pi^{\perp}}:\Comp^3\to\Comp^6$ with ranges ${\rm Ran}(\Pi)$ and ${\rm Ran}(\Pi^{\perp})$, respectively. Now define a linear map $\Phi:M_6\to M_6$ by its Choi matrix
    $$C_{\Phi}=(V_{\Pi}\otimes V_{\Pi^{\perp}})\rho_{\rm UPB}(V_{\Pi}\otimes V_{\Pi^{\perp}})^*.$$
Then $C_{\Phi}\in \ppt\setminus \Sep$, and therefore, $\Phi$ is a PPT, non-EB map. Note that we have more precisely
    $$\Phi = \Ad_{V_{\Pi^{\perp}}} \circ \Phi_{\rm UPB} \circ \Ad_{V_{\Pi}^*},$$
where the Choi matix of $\Phi_{\rm UPB}:M_3\to M_3$ is defined to be $\rho_{\rm UPB}$. Since $V_{\Pi^{\perp}}^* V_{\Pi} = 0$, this implies
    $$\Phi^2 = \Ad_{V_{\Pi^{\perp}}} \circ \Phi_{\rm UPB} \circ (\Ad_{V_{\Pi}^*} \circ \Ad_{V_{\Pi^{\perp}}}) \circ \Phi_{\rm UPB} \circ \Ad_{V_{\Pi}^*} =0,$$
and hence $\Phi$ is nilpotent.
\end{example}

\section{Perron-Frobenius support splitting of PPT maps}\label{sec:splitting}

Throughout this section, let $\Phi:M_d\to M_d$ be a PPT map and suppose that there exists a nonzero positive semidefinite matrix $\rho\geq 0$ and a positive number $\lambda>0$ such that
    $$\Phi(\rho)=\lambda \rho.$$
By Theorem \ref{thm:PF}, the natural choice is the spectral radius $\lambda = r_{\rm sp}(\Phi)$, but the results presented in this section apply to any positive eigenvalue. Let us further take the projections onto the range $H_{\rho}:={\rm Ran}(\rho)$ of $\rho$ and its orthogonal complement
    $$\Pi_{\rho}={\rm supp\,}\rho:={\rm Proj}_{H_{\rho}}, \quad \Pi_{\rho}^{\perp}:=I_d-\Pi_{\rho}= {\rm Proj}_{H_{\rho}^{\perp}} = {\rm Proj}_{{\rm Ker}(\rho)}.$$
We mainly focus on the case $0\neq \Pi_{\rho}\neq I_d$, which is precisely the case not covered by Theorem \ref{thm:HRSF-full-support}.

We first note that $\Phi$ maps positive operators supported on $\Pi_{\rho}$ to positive operators supported on the same projection.

\begin{lemma}\label{lem:diagonal-support-propagation}
For every $u\in {\rm Ran}(\Pi_{\rho}) = H_{\rho}$,
    $$\Phi(\proj{u})=\Pi_{\rho} \Phi(\proj{u}) \Pi_{\rho}.$$
\end{lemma}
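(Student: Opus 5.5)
The plan is to use only positivity of $\Phi$ together with the eigen-equation $\Phi(\rho)=\lambda\rho$ with $\lambda>0$; the PPT hypothesis is not needed for this step. The key observation is a domination argument. Since $H_\rho={\rm Ran}(\rho)$, for every $u\in H_\rho$ there is a constant $c>0$ with
$$\proj{u}\leq c\,\rho .$$
Concretely, one can take $c=\la u|\rho^{+}|u\ra$, where $\rho^{+}$ is the Moore--Penrose pseudo-inverse; equivalently, $\rho\geq \mu_{\min}\Pi_\rho$, where $\mu_{\min}>0$ is the smallest nonzero eigenvalue of $\rho$, and $\proj{u}\leq \|u\|^2\Pi_\rho$.

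Applying the positive map $\Phi$ to this inequality gives
$$0\leq \Phi(\proj{u})\leq c\,\Phi(\rho)=c\lambda\,\rho .$$

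Next we use the standard fact that if $0\leq A\leq B$, then ${\rm Ker}(B)\subseteq {\rm Ker}(A)$. Indeed, for $w\in{\rm Ker}(B)$ we have $\la w|A|w\ra\leq \la w|B|w\ra=0$, and positivity of $A$ then forces $A^{1/2}w=0$, hence $Aw=0$. With $A=\Phi(\proj{u})$ and $B=c\lambda\rho$, this gives ${\rm Ker}(\rho)\subseteq {\rm Ker}(\Phi(\proj{u}))$, that is, $\Phi(\proj{u})\Pi_\rho^\perp=0$. Because $\Phi(\proj{u})$ is self-adjoint, taking adjoints also gives $\Pi_\rho^\perp\Phi(\proj{u})=0$. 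Writing $I_d=\Pi_\rho+\Pi_\rho^\perp$ on both sides then yields $\Phi(\proj{u})=\Pi_\rho\Phi(\proj{u})\Pi_\rho$.

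No step here is a serious obstacle. The only point needing care is the domination $\proj{u}\leq c\rho$ for $u$ in the range, which holds because $\rho$ restricted to $H_\rho$ is invertible. This is also where the hypothesis $\lambda>0$ enters only mildly: even $\lambda=0$ would force $\Phi(\proj{u})=0$, and the claim would still hold trivially.
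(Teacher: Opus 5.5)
Your proof is correct and follows the paper's argument: you dominate $\proj{u}\leq c\rho$, apply positivity of $\Phi$ and the eigen-equation to get $0\leq \Phi(\proj{u})\leq c\lambda\rho$, and deduce ${\rm Ker}(\rho)\subseteq{\rm Ker}(\Phi(\proj{u}))$. Your explicit constant and your remark that only positivity (not PPT) is used match the paper's own comment after the lemma.
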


\begin{proof}
Because $\rho$ is strictly positive on the space $H_{\rho}$, there is a constant
$c_u>0$ such that
    $$\proj{u}\leq  c_u\rho.$$
Since $\Phi$ is positive, we have $\lambda c_u\rho - \Phi(|u\ra\la u|)\geq 0$ and thus $0\leq \Phi(|u\ra\la u|)\leq \lambda c_u\rho$. Now if $v\in H_{\rho}^{\perp} = {\rm Ker}(\rho)$, then one has
    $$0\leq \la v|\Phi(|u\ra\la u|)|v\ra \leq \lambda c_u \la v|\rho|v\ra = 0.$$
This shows $v\in {\rm Ker}(\Phi(|u\ra\la u|))$, and hence the conclusion follows.
\end{proof}

Note that the preceeding lemma uses only the positivity of $\Phi$. The PPT property gives a stronger conclusion for general rank-one operators $|u\ra\la w|$. To this end, we recall the following \emph{Schur-complement} criterion (see, for example, \cite[Chapter~1]{Zha05}).

\begin{lemma}
\label{lem:SchurComp}
A $2\times 2$ block matrix $M=\begin{pmatrix}P&Z\\ Z^*&Q\end{pmatrix}$
is positive semidefinite if and only if
    $$P\geq 0, \quad {\rm Ran}(Z)\subseteq {\rm Ran}(P), \quad \text{and} \quad Q-Z^*P^{-1}Z\geq 0,$$
where $P^{-1}$ denotes the \emph{Moore--Penrose inverse} of the positive semidefinite matrix $P$. 
\end{lemma}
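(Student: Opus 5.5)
We prove the Schur-complement criterion, Lemma~\ref{lem:SchurComp}, in both directions; the key tool is a congruence that block-diagonalizes $M$. Throughout, $P^{-1}$ denotes the Moore--Penrose inverse of $P\ge 0$, so $PP^{-1}=P^{-1}P=\Pi_P$, the orthogonal projection onto ${\rm Ran}(P)$. Also $P^{-1}\ge 0$ and $P^{-1}PP^{-1}=P^{-1}$.

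\emph{Necessity.} Assume $M\ge 0$. Compressing $M$ to the first block gives $P\ge 0$.

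For the range condition, take $x\in{\rm Ker}(P)$ and any vector $y$, and expand the quadratic form along the vector $(tx,y)$ with $t\in\Real$:
$$0\le t^2\la x|P|x\ra+2t\,{\rm Re}\la x|Z|y\ra+\la y|Q|y\ra = 2t\,{\rm Re}\la x|Z|y\ra+\la y|Q|y\ra,$$
since $\la x|P|x\ra=0$. This expression is linear in $t$ and nonnegative for all $t$, so ${\rm Re}\la x|Z|y\ra=0$. Replacing $y$ by $iy$ gives ${\rm Im}\la x|Z|y\ra=0$. Hence $Z^*x=0$, i.e.\ ${\rm Ker}(P)\subseteq{\rm Ker}(Z^*)$. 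Taking orthogonal complements yields ${\rm Ran}(Z)\subseteq{\rm Ran}(P)$, which means $\Pi_PZ=Z$.

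For the last condition, set
$$S=\begin{pmatrix}I&-P^{-1}Z\\0&I\end{pmatrix}.$$
A direct computation using $PP^{-1}Z=\Pi_PZ=Z$ and $Z^*P^{-1}PP^{-1}Z=Z^*P^{-1}Z$ gives
$$S^*MS=\begin{pmatrix}P&0\\0&Q-Z^*P^{-1}Z\end{pmatrix}.$$
Since $S^*MS\ge 0$, its lower-right block satisfies $Q-Z^*P^{-1}Z\ge 0$.

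\emph{Sufficiency.} Assume the three conditions. They give $\Pi_PZ=Z$, so the identity $S^*MS={\rm diag}(P,\,Q-Z^*P^{-1}Z)$ still holds, and its right-hand side is $\ge 0$. The matrix $S$ is invertible, with $S^{-1}=\begin{pmatrix}I&P^{-1}Z\\0&I\end{pmatrix}$. Therefore
$$M=(S^{-1})^*\,{\rm diag}(P,\,Q-Z^*P^{-1}Z)\,S^{-1}\ge 0.$$

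The only delicate step is the range inclusion in the necessity direction. It is what makes the identity $PP^{-1}Z=Z$ valid, and that identity is needed for the congruence to produce the block-diagonal form in both directions.
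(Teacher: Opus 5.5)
Your proof is correct and complete: the kernel inclusion ${\rm Ker}(P)\subseteq{\rm Ker}(Z^*)$ from the quadratic form that is linear in $t$ gives ${\rm Ran}(Z)\subseteq{\rm Ran}(P)$, i.e.\ $PP^{-1}Z=Z$, and this is exactly what lets the invertible congruence by $S$ block-diagonalize $M$ in both directions. The paper does not prove this lemma but only cites \cite[Chapter~1]{Zha05}, and your argument is a self-contained version of the standard proof of that textbook fact, so nothing needs to be added.
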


\begin{proposition}\label{prop:two-sided-support}
For every $u\in {\rm Ran}(\Pi_{\rho}) = H_{\rho}$ and every $w\in\Comp^d$,
    $$\Phi(|u\ra\la w|)
 =\Pi_{\rho}\Phi(|u\ra\la w|)\Pi_{\rho}.$$
The same conclusion holds whenever $u\in \Comp^d$ and $w\in H_{\rho}$.
\end{proposition}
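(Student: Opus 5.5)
The plan is to combine the two halves of the PPT condition with the Schur-complement criterion (Lemma \ref{lem:SchurComp}). Complete positivity of $\Phi$ gives the left support condition $\Pi_{\rho}\Phi(|u\ra\la w|)=\Phi(|u\ra\la w|)$. Complete copositivity gives the right one, $\Phi(|u\ra\la w|)\Pi_{\rho}=\Phi(|u\ra\la w|)$. Lemma \ref{lem:diagonal-support-propagation} supplies the diagonal information these range inclusions are compared against.

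\emph{Left support.} Fix $u\in H_{\rho}$ and $w\in\Comp^d$. The block matrix
$$\begin{pmatrix}|u\ra\la u| & |u\ra\la w|\\ |w\ra\la u| & |w\ra\la w|\end{pmatrix}=\Big(|1\ra\otimes|u\ra+|2\ra\otimes|w\ra\Big)\Big(\la 1|\otimes\la u|+\la 2|\otimes\la w|\Big)$$
is positive semidefinite in $M_2\otimes M_d$. Since $\Phi$ is CP, applying $\id_2\otimes\Phi$ gives
$$\begin{pmatrix}\Phi(|u\ra\la u|) & \Phi(|u\ra\la w|)\\ \Phi(|w\ra\la u|) & \Phi(|w\ra\la w|)\end{pmatrix}\geq 0.$$
Here we use that $\Phi$ is Hermiticity-preserving, so the lower-left block is $\Phi(|u\ra\la w|)^*$. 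Lemma \ref{lem:SchurComp} then yields ${\rm Ran}\,\Phi(|u\ra\la w|)\subseteq{\rm Ran}\,\Phi(|u\ra\la u|)$. By Lemma \ref{lem:diagonal-support-propagation} the latter is contained in $H_{\rho}$, so $\Pi_{\rho}\Phi(|u\ra\la w|)=\Phi(|u\ra\la w|)$.

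\emph{Right support.} Apply the same argument to the CP map $\top\circ\Phi$. This gives
$$\begin{pmatrix}\Phi(|u\ra\la u|)^{\top} & \Phi(|u\ra\la w|)^{\top}\\ \Phi(|w\ra\la u|)^{\top} & \Phi(|w\ra\la w|)^{\top}\end{pmatrix}\geq 0,$$
hence ${\rm Ran}\,\Phi(|u\ra\la w|)^{\top}\subseteq{\rm Ran}\,\Phi(|u\ra\la u|)^{\top}$. Because $\Phi(|u\ra\la u|)=\Pi_{\rho}\Phi(|u\ra\la u|)\Pi_{\rho}$, taking transposes shows $\Phi(|u\ra\la u|)^{\top}$ is supported on ${\rm Ran}(\Pi_{\rho}^{\top})$. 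Therefore $\Pi_{\rho}^{\top}\Phi(|u\ra\la w|)^{\top}=\Phi(|u\ra\la w|)^{\top}$. Transposing back gives $\Phi(|u\ra\la w|)\Pi_{\rho}=\Phi(|u\ra\la w|)$. Combined with the left support, this proves the first claim.

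\emph{Second claim.} For $u\in\Comp^d$ and $w\in H_{\rho}$, use Hermiticity preservation again: $\Phi(|u\ra\la w|)=\Phi(|w\ra\la u|)^*$. The first claim applies to $\Phi(|w\ra\la u|)$, and taking adjoints preserves the form $\Pi_{\rho}(\cdot)\Pi_{\rho}$.

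The only delicate point is the right-support step. There one must check that transposition turns the support projection $\Pi_{\rho}$ into $\Pi_{\rho}^{\top}=\overline{\Pi_{\rho}}$, which is still the orthogonal projection onto the range of $\Phi(|u\ra\la u|)^{\top}$. This is where copositivity is genuinely needed: CP alone only controls the range of $\Phi(|u\ra\la w|)$, not its row space. The rest is routine.
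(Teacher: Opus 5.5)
Your proof is correct and follows the paper's argument. The left support comes from applying $\id_2\otimes\Phi$ to the rank-one block matrix together with Lemma~\ref{lem:SchurComp} and Lemma~\ref{lem:diagonal-support-propagation}. The right support comes from copositivity: you use $\id_2\otimes(\top\circ\Phi)$ and transpose back, while the paper uses $\top_2\otimes\Phi$ to swap the off-diagonal blocks directly, and the two differ only by a full transpose. Your treatment of the second claim via $\Phi(|u\ra\la w|)=\Phi(|w\ra\la u|)^*$ is a clean way to fill in the paper's ``analogous''. One small correction to your closing remark: $\Pi_\rho^{\top}$ need only project onto a space \emph{containing} ${\rm Ran}\,\Phi(|u\ra\la u|)^{\top}$, not onto that range itself, and containment is all your main argument uses.
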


\begin{proof}
We only show the first assertion, as the second is analogous. Put $Z:=\Phi(\ket{u}\!\bra{w})$. Applying the positive map $\id_2\otimes \Phi$ to the rank-one operator
$\begin{pmatrix}
\proj{u}& |u\ra\la w|\\
|w\ra\la u|&\proj{w}
\end{pmatrix} = \begin{pmatrix}
u \\ w
\end{pmatrix}
\begin{pmatrix}
u \\ w
\end{pmatrix}^* \geq 0$ gives
    $$\begin{pmatrix}
 \Phi(\proj{u})& Z\\
 Z^*&\Phi(\proj{w})
 \end{pmatrix}\geq 0.$$
By Lemmas~\ref{lem:diagonal-support-propagation} and \ref{lem:SchurComp}, the block $\Phi(|u\ra\la u|)$ is
supported on $\Pi_{\rho}$ and
    $${\rm Ran}(Z)\subseteq {\rm Ran}(\Phi(|u\ra\la u|))\subseteq {\rm Ran}(\Pi_\rho).$$
This in particular implies $\Pi_{\rho}^{\perp} Z = 0$. On the other hand, the completely copositivity additionally implies that 
    $$\begin{pmatrix}
 \Phi(\proj{u})& Z^*\\
 Z&\Phi(\proj{w})
 \end{pmatrix} = (\top_2\otimes \Phi)\begin{pmatrix}
\proj{u}& |u\ra\la w|\\
|w\ra\la u|&\proj{w}
\end{pmatrix}\geq 0.$$
Therefore, we similarly have $\Pi_{\rho}^{\perp}Z^*=0$ and hence $Z\Pi_{\rho}^{\perp} = 0$.
Consequently, we obtain that
    $$Z = (\Pi_{\rho}+\Pi_{\rho}^{\perp})Z (\Pi_{\rho}+\Pi_{\rho}^{\perp}) =\Pi_{\rho}Z\Pi_{\rho}.$$
\end{proof}

\begin{corollary}\label{cor:linear-support-form}
If $X\in M_d$ satisfies $\Pi_{\rho}^{\perp}X\Pi_{\rho}^{\perp}=0$, then 
    $$\Phi(X)=\Pi_{\rho}\Phi(X)\Pi_{\rho}.$$
In particular, the set $\Pi_{\rho}M_d\Pi_{\rho}=\{\Pi_{\rho}X\Pi_{\rho}:X\in M_d\}$ is an invariant subspace of the map $\Phi$.
\end{corollary}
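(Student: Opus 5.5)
The plan is to decompose $X$ according to the block structure induced by $\Pi_{\rho}$ and $\Pi_{\rho}^{\perp}$, and then apply Proposition~\ref{prop:two-sided-support} to each block by linearity. Since $I_d=\Pi_{\rho}+\Pi_{\rho}^{\perp}$, every $X\in M_d$ splits as
$$X=\Pi_{\rho}X\Pi_{\rho}+\Pi_{\rho}X\Pi_{\rho}^{\perp}+\Pi_{\rho}^{\perp}X\Pi_{\rho}+\Pi_{\rho}^{\perp}X\Pi_{\rho}^{\perp}.$$
Under the hypothesis $\Pi_{\rho}^{\perp}X\Pi_{\rho}^{\perp}=0$, the last term vanishes. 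Therefore $X=\Pi_{\rho}X+\Pi_{\rho}^{\perp}X\Pi_{\rho}$.

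Next, I would write each surviving term as a sum of rank-one operators of the type handled by Proposition~\ref{prop:two-sided-support}. Fix an orthonormal basis $\{e_i\}$ of $H_{\rho}$ and an orthonormal basis $\{f_j\}$ of $\Comp^d$. Then
$$\Pi_{\rho}X=\sum_{i,j}\la e_i|X|f_j\ra\,|e_i\ra\la f_j|,$$
and each summand has its left vector $e_i$ in $H_{\rho}$. This is the first case of the proposition.

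Similarly,
$$\Pi_{\rho}^{\perp}X\Pi_{\rho}=\sum_{j,i} c_{ji}\,|f_j\ra\la e_i|$$
for suitable scalars $c_{ji}$, and each summand has its right vector $e_i$ in $H_{\rho}$. This is the second case of the proposition.

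Applying the proposition to every summand and using linearity of $\Phi$ and of $Y\mapsto \Pi_{\rho}Y\Pi_{\rho}$ gives $\Phi(X)=\Pi_{\rho}\Phi(X)\Pi_{\rho}$.

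For the invariance claim, take $X=\Pi_{\rho}Y\Pi_{\rho}$. Then $\Pi_{\rho}^{\perp}X\Pi_{\rho}^{\perp}=0$, so the first part applies. It gives $\Phi(X)=\Pi_{\rho}\Phi(X)\Pi_{\rho}\in\Pi_{\rho}M_d\Pi_{\rho}$.

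There is no real obstacle here. The only point needing care is to check that the off-diagonal block $\Pi_{\rho}^{\perp}X\Pi_{\rho}$ is covered by the second assertion of Proposition~\ref{prop:two-sided-support}, where $w\in H_{\rho}$ and $u$ is arbitrary, rather than the first. This is exactly where the complete copositivity built into that proposition is used.
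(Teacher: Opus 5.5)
Your proof is correct and follows the paper's argument: the paper also notes that any $X$ with $\Pi_{\rho}^{\perp}X\Pi_{\rho}^{\perp}=0$ is a linear combination of rank-one operators $|u\rangle\langle w|$ with $u\in H_{\rho}$ or $w\in H_{\rho}$, and then applies Proposition~\ref{prop:two-sided-support}; your splitting $X=\Pi_{\rho}X+\Pi_{\rho}^{\perp}X\Pi_{\rho}$ just writes that decomposition out explicitly. One small correction to your closing remark: complete copositivity is used in \emph{both} cases of that proposition, not only when $w\in H_{\rho}$ (in the case $u\in H_{\rho}$ it is exactly what gives $\Phi(|u\rangle\langle w|)\Pi_{\rho}^{\perp}=0$), so the first block also depends on it.
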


\begin{proof}
This follows from
Proposition~\ref{prop:two-sided-support} and the fact that every matrix satisfying $\Pi_{\rho}^{\perp}X\Pi_{\rho}^{\perp}=0$ is a linear combination of rank-one operators $|u\ra\la w|$ for which $u\in {\rm Ran}(\Pi_{\rho})$ or $w\in {\rm Ran}(\Pi_{\rho})$.
\end{proof}

Let us now decompose the Choi matrix $C_{\Phi}\in M_d^{\otimes 2}$ as a $2\times 2$ block matrix according to 
    $$(\Comp^d)^{\otimes 2}=(H_{\rho}\otimes \Comp^d)\oplus (H_{\rho}^{\perp}\otimes \Comp^d).$$
In other words, we write
\begin{equation}\label{eq:Choi-PQ-block}
 C_{\Phi}=
 \begin{pmatrix}
 P&Z\\Z^*&Q
 \end{pmatrix}\geq 0,
\end{equation}
where $P$ acts on $H_{\rho}\otimes \Comp^d$, $Q$ acts on
$H_{\rho}^{\perp}\otimes \Comp^d$, and $Z$ is the corresponding off-diagonal
operator. Then Corollary~\ref{cor:linear-support-form} implies the output support relations
\begin{equation} \label{eq:AB-output-support}
\begin{split}
    P &=(I_{H_{\rho}}\otimes \Pi_{\rho})P(I_{H_{\rho}}\otimes \Pi_{\rho}),\\
    Z &=(I_{H_{\rho}}\otimes \Pi_{\rho})Z(I_{H_{\rho}^{\perp}}\otimes \Pi_{\rho}).
\end{split}
\end{equation}
By applying Lemma~\ref{lem:SchurComp} to
\eqref{eq:Choi-PQ-block}, we obtain ${\rm Ran} Z\subseteq{\rm Ran} P$ and $Q-Z^*P^{-1}Z\geq 0.$ More precisely, we have the following positive decomposition
    $$C_{\Phi}= \underbrace{\begin{pmatrix} P&Z\\Z^*&Z^*P^{-1}Z
    \end{pmatrix}}_{\geq 0} + \underbrace{\begin{pmatrix} 0&0\\0&Q-Z^*P^{-1}Z
    \end{pmatrix}}_{\geq 0}.$$
Accordingly, we define two linear maps $\A,\B:M_d\to M_d$ by their Choi
matrices:
    $$C_{\A} = \begin{pmatrix} P&Z\\Z^*&Z^*P^{-1}Z
    \end{pmatrix}, \quad 
     C_{\B} =
    \begin{pmatrix} 0&0\\0&Q-Z^*P^{-1}Z
    \end{pmatrix}.$$
By definition, $\A$ and $\B$ are CP and satisfy $\Phi = \A + \B$. 

\begin{theorem}
\label{thm:Perron-splitting-full}
For every $X\in M_d$, the above maps $\A$ and $\B$ satisfy
    $$\mathcal{A}(X) =\Pi_{\rho}\A(X)\Pi_{\rho}, \quad  \mathcal{B}(X)=\B(\Pi_{\rho}^{\perp}X\Pi_{\rho}^{\perp}),$$
and $\B \circ \A = 0$.
\end{theorem}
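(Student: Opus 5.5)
Everything will be read off from the Choi blocks. For a map $\Psi$ with Choi matrix $C_\Psi=\sum_{i,j}|i\ra\la j|\otimes\Psi(|i\ra\la j|)$ we have $\Psi(X)=\Tr_1\big[C_\Psi(X^\top\otimes I)\big]$. Equivalently, $\Psi(|a\ra\la b|)=(\la \bar a|\otimes I)C_\Psi(|\bar b\ra\otimes I)$, where the bar is entrywise conjugation in the standard basis. To avoid basis issues, I will first make a unitary change of basis adapted to $H_\rho$. We may assume $\Pi_\rho$ is a coordinate projection, since replacing $\Phi$ by $\Ad_U\circ\Phi\circ\Ad_{U^*}$ preserves PPT and all statements. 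Then conjugation fixes $H_\rho$ and $H_\rho^\perp$, and the block decomposition \eqref{eq:Choi-PQ-block} on the input leg is exactly the decomposition of the input into $\Pi_\rho$ and $\Pi_\rho^\perp$ parts.

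\textbf{Claim $\B(X)=\B(\Pi_\rho^\perp X\Pi_\rho^\perp)$.} $C_\B$ is supported on $H_\rho^\perp\otimes\Comp^d$ on both sides, so $(\la \bar a|\otimes I)C_\B(|\bar b\ra\otimes I)=0$ whenever $a\in H_\rho$ or $b\in H_\rho$. Hence $\B$ kills every $|a\ra\la b|$ with $a$ or $b$ in $H_\rho$. By the decomposition used in Corollary~\ref{cor:linear-support-form}, these span $\{X:\Pi_\rho^\perp X\Pi_\rho^\perp=0\}$. Since $X-\Pi_\rho^\perp X\Pi_\rho^\perp$ lies in that set, the claim follows.

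\textbf{Claim $\A(X)=\Pi_\rho\A(X)\Pi_\rho$.} It suffices to show $C_\A=(I\otimes\Pi_\rho)C_\A(I\otimes\Pi_\rho)$. By \eqref{eq:AB-output-support}, the blocks $P$ and $Z$ are already compressed by $I\otimes\Pi_\rho$ on both sides. The key step, and the main point to check, is the lower-right block $Z^*P^{-1}Z$. Write $E:=I\otimes\Pi_\rho$ on the relevant spaces. Since $P=EPE$, its Moore--Penrose inverse satisfies $P^{-1}=EP^{-1}E$, because $P$ is Hermitian and commutes with the projection $E$, and the pseudo-inverse is supported on ${\rm Ran}P\subseteq{\rm Ran}E$. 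Combining this with $Z=EZE'$, where $E'=I_{H_\rho^\perp}\otimes\Pi_\rho$, gives $Z^*P^{-1}Z=E'(Z^*P^{-1}Z)E'$. So every block of $C_\A$ has output leg supported in $\Pi_\rho$. Therefore each $\A(|a\ra\la b|)$, which is a compression of $C_\A$ on the input leg, is compressed by $\Pi_\rho$ on both sides. Linearity then gives the claim.

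\textbf{Claim $\B\circ\A=0$.} This follows from the first two claims: $\B(\A(X))=\B(\Pi_\rho^\perp\Pi_\rho\A(X)\Pi_\rho\Pi_\rho^\perp)=\B(0)=0$. The only genuine subtlety is the pseudo-inverse support identity $P^{-1}=EP^{-1}E$, together with keeping the conjugation and transpose conventions consistent. The basis choice that makes $\Pi_\rho$ real-diagonal removes the latter issue.
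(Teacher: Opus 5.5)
Your proof is correct and is essentially the paper's argument: $P=EPE$ forces $P^{-1}=EP^{-1}E$, so $Z^*P^{-1}Z$, and hence all of $C_\A$, sits in the $I_d\otimes\Pi_\rho$ output corner; $C_\B$ lives on the $H_\rho^\perp$ input corner; and $\B\circ\A=0$ follows by composing the two support relations. Your basis-change preamble handles a conjugation point the paper passes over, but it should be stated precisely: since $C_{\Ad_U\circ\Phi\circ\Ad_{U^*}}=(\overline{U}\otimes U)\,C_\Phi\,(\overline{U}\otimes U)^*$, the block decomposition you transport back is along $\overline{H_\rho}\otimes\Comp^d$, and this is the decomposition for which Eq.~\eqref{eq:AB-output-support} genuinely holds (first-leg vectors $|\bar a\rangle$ encode inputs $a\in H_\rho$), whereas it coincides with the literal $H_\rho\otimes\Comp^d$ only when $H_\rho$ is invariant under entrywise conjugation, so your ``preserves all statements'' is valid once $\A,\B$ are defined through this intrinsic decomposition.
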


\begin{proof}
The support relations in Eq. \eqref{eq:AB-output-support} imply that $P^{-1}$ is supported on
$H_{\rho}\otimes H_{\rho}$ and therefore
\begin{align*}
    Z^*P^{-1}Z &= \big((I_{H_{\rho}^{\perp}}\otimes \Pi_{\rho})Z^*(I_{H_{\rho}}\otimes \Pi_{\rho})\big)P^{-1}\big((I_{H_{\rho}}\otimes \Pi_{\rho})Z(I_{H_{\rho}^{\perp}}\otimes \Pi_{\rho})\big) \\
    &= (I_{H_{\rho}^{\perp}}\otimes \Pi_{\rho})Z^*P^{-1}Z(I_{H_{\rho}^{\perp}}\otimes \Pi_{\rho}).
\end{align*}
Consequently, one has
\begin{equation} \label{eq:Asupport}
    C_{\mathcal{A}} =(I_d\otimes \Pi_{\rho})C_{\mathcal{A}}(I_d\otimes \Pi_{\rho}),
\end{equation}
which is equivalent to $\mathcal{A}(\cdot) =\Pi_{\rho}\A(\cdot)\Pi_{\rho}$. On the other hand, since $C_{\B}$ is supported on the
$\Pi_{\rho}^{\perp}$-corner of the input Choi factor, 
$\B(|u\ra\la w|)=0$ whenever $u\in H_{\rho}$ or $w\in H_{\rho}$, which is exactly $\mathcal{B}(\cdot)=\B(\Pi_{\rho}^{\perp}\cdot \Pi_{\rho}^{\perp})$.  Finally, these two relations imply that
$$
 \B(\A(X))
 =\B(\Pi_{\rho}^{\perp}\Pi_{\rho}\A(X)\Pi_{\rho}\Pi_{\rho}^{\perp})=0,
$$
proving $\B\circ \A = 0$.
\end{proof}

We additionally define two linear maps $\Phi_{\rho}$ and $\Phi_{\rho^{\perp}}$ corresponding to two diagonal corners of the Choi matrix
\begin{align*}
    C_{\Phi_{\rho}}&:=C_{\Phi}\big|_{H_{\rho}\otimes H_{\rho}} = P\big|_{H_{\rho}\otimes H_{\rho}},\\
    C_{\Phi_{\rho^{\perp}}}&:=C_{\Phi}\big|_{H_{\rho}^{\perp}\otimes H_{\rho}^{\perp}} = Q\big|_{H_{\rho}^{\perp}\otimes H_{\rho}^{\perp}}.
\end{align*}
To understand these two maps explicitly, we consider two isometries $V_{\rho}:H_{\rho}\to \Comp^d$ and $V_{\rho^{\perp}}:H_{\rho}^{\perp}\to \Comp^d$ satisfying
$V_{\rho}V_{\rho}^*=\Pi_{\rho}$ and $V_{\rho^{\perp}}V_{\rho^{\perp}}^*=\Pi_{\rho}^{\perp}$, resp. Then we can write
\begin{equation} \label{eq:PPTCorner}
\begin{split}
    \Phi_{\rho} &=\Ad_{V_{\rho}^*}\circ\Phi\circ\Ad_{V_{\rho}}:B\big(H_{\rho}\big)\to B\big(H_{\rho}\big),\\
     \Phi_{\rho^{\perp}}&=\Ad_{V_{\rho^{\perp}}^*}\circ\Phi\circ\Ad_{V_{\rho^{\perp}}}:B\big(H_{\rho}^{\perp}\big)\to B\big(H_{\rho}^{\perp}\big).
\end{split}
\end{equation}
In particular, both $\Phi_{\rho}$ and $\Phi_{\rho^{\perp}}$ are PPT maps by the mapping cone property \eqref{eq:MappingCone}. 

The following factorizations relate the powers of $\A$ and $\B$ to those of the corner maps. It will then yield a recursive bound on the entanglement breaking index of $\Phi$.

\begin{proposition}
\label{prop:corner-factorization}
For every integer $p\geq 1$, we have
\begin{align*}
    \A^p &=\Ad_{V_{\rho}}\circ\Phi_{\rho}^{p-1}\circ \Ad_{V_{\rho}^*}\circ \A,\\
    \B^p &=\Ad_{V_{\rho^{\perp}}}\circ\Phi_{\rho^{\perp}}^{p-1}\circ\Ad_{(V_{\rho^{\perp}})^*} \circ \B.
\end{align*}
\end{proposition}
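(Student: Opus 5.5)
The plan is to reduce both identities to two facts about how $\A$ and $\B$ act on the corners $\Pi_{\rho}M_d\Pi_{\rho}$ and $\Pi_{\rho}^{\perp}M_d\Pi_{\rho}^{\perp}$, and then induct on $p$. Write $C_{\rho}(Y):=\Pi_{\rho}Y\Pi_{\rho}$ and $C_{\rho^{\perp}}(Y):=\Pi_{\rho}^{\perp}Y\Pi_{\rho}^{\perp}$. From \eqref{eq:PPTCorner} and $V_{\rho}V_{\rho}^*=\Pi_{\rho}$ one gets
$$\Ad_{V_{\rho}}\circ\Phi_{\rho}^{m}\circ\Ad_{V_{\rho}^*}=(C_{\rho}\circ\Phi\circ C_{\rho})^{m}\circ C_{\rho},\qquad m\ge 0,$$
and the analogous formula for $\rho^{\perp}$. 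The claimed identities therefore read
$$\A^p=(C_{\rho}\Phi C_{\rho})^{p-1}C_{\rho}\A,\qquad \B^p=(C_{\rho^{\perp}}\Phi C_{\rho^{\perp}})^{p-1}C_{\rho^{\perp}}\B .$$

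\textbf{The $\A$-identity.} By Theorem~\ref{thm:Perron-splitting-full} we have $\A=C_{\rho}\A$. For $Y=C_{\rho}(Y)$ we also have $\B(Y)=\B(\Pi_{\rho}^{\perp}\Pi_{\rho}Y\Pi_{\rho}\Pi_{\rho}^{\perp})=0$. Since $\Phi=\A+\B$, this gives
$$\A(Y)=\Phi(Y)=C_{\rho}\Phi C_{\rho}(Y)\quad\text{whenever } Y=C_{\rho}(Y),$$
where the last equality uses Corollary~\ref{cor:linear-support-form}. Every output $\A^{k}(X)$ with $k\ge1$ lies in $\Pi_{\rho}M_d\Pi_{\rho}$. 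Hence $\A^{p}=\A^{p-1}\circ C_{\rho}\A$, and peeling off one factor at a time gives $\A^{p}=(C_{\rho}\Phi C_{\rho})^{p-1}C_{\rho}\A$ by induction on $p$. This is routine.

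\textbf{The $\B$-identity.} The same argument needs two inputs: $\B=C_{\rho^{\perp}}\B$ (output support in $\Pi_{\rho}^{\perp}$), and $\B(Y)=C_{\rho^{\perp}}\Phi C_{\rho^{\perp}}(Y)$ for $Y=C_{\rho^{\perp}}(Y)$. The second follows from the first. Indeed, $\B(Y)=\B(C_{\rho^{\perp}}Y)$ by Theorem~\ref{thm:Perron-splitting-full}, and $C_{\rho^{\perp}}\A=0$ because $\A=C_{\rho}\A$. Together these give $C_{\rho^{\perp}}\B(Y)=C_{\rho^{\perp}}\Phi(C_{\rho^{\perp}}Y)$. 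So the whole proof hinges on the output-support claim $\B=C_{\rho^{\perp}}\B$. Already for $p=1$ the statement literally says $\B=C_{\rho^{\perp}}\B$, since $\Phi_{\rho^{\perp}}^{0}=\id$ and $\Ad_{V_{\rho^{\perp}}}\Ad_{V_{\rho^{\perp}}^*}=C_{\rho^{\perp}}$.

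\textbf{Main obstacle: this claim fails in general.} Take $d=2$ and $\Phi=\Ad_{|1\ra\la1|}+\Ad_{|1\ra\la2|}$. This map is EB, hence PPT. It satisfies $\Phi(\proj1)=\proj1$, so $\rho=\proj1$ and $\lambda=1$. Here $Z=0$ and $\B(X)=X_{22}\proj1$, whose output sits in $\Pi_{\rho}$. Thus $C_{\rho^{\perp}}\B=0\neq\B$, and the stated $\B$-identity fails already at $p=1$.

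The version the argument does yield is
$$\B^p=\B\circ\Ad_{V_{\rho^{\perp}}}\circ\Phi_{\rho^{\perp}}^{p-1}\circ\Ad_{V_{\rho^{\perp}}^*}.$$
It follows by the same induction, using $\B=\B\circ C_{\rho^{\perp}}$ and $C_{\rho^{\perp}}\B(Y)=C_{\rho^{\perp}}\Phi C_{\rho^{\perp}}(Y)$. Alternatively, the stated form holds for $\B':=C_{\rho^{\perp}}\B$. I would proceed by proving the $\A$-identity exactly as stated and the $\B$-identity in one of these corrected forms. Either form suffices for the later EB-index bounds, since $\EB$ is a mapping cone. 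I would also flag that the stated $\B$-formula needs the extra hypothesis $\B=C_{\rho^{\perp}}\B$, as the example above shows.
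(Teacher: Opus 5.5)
Your proposal is correct, and you have caught a genuine error in the statement. Your counterexample checks out. For $\Phi(X)=\Tr(X)\proj{1}$ on $M_2$, the eigenstate is forced to be $\rho=\proj{1}$ (with $\lambda=1$), $Z=0$, and $\B(X)=X_{22}\proj{1}$. Hence $\Ad_{\Pi_\rho^\perp}\circ\B=0\neq\B$, and the stated $\B$-identity fails already at $p=1$. It can also fail for every $p$: for $\Phi=\Ad_{|1\ra\la 1|}+\Ad_{|1\ra\la 2|}+\Ad_{|2\ra\la 2|}$ one gets $\B^p(X)=\B(X)=X_{22}I_2$, whose output is not supported on $\Pi_\rho^\perp$.

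Your $\A$-argument is the paper's own: the output support $\A=\Ad_{\Pi_\rho}\circ\A$ together with $\A\circ\Ad_{V_\rho}=\Phi\circ\Ad_{V_\rho}$. The detour through Corollary~\ref{cor:linear-support-form} is harmless but unnecessary. For $\B$ the paper only says the identity ``follows analogously''. But the ingredients it actually records in \eqref{eq:ABsupport1}--\eqref{eq:ABsupport2} are the \emph{input} support $\B=\B\circ\Ad_{\Pi_\rho^\perp}$ and the relation $\Ad_{V_{\rho^\perp}^*}\circ\B=\Ad_{V_{\rho^\perp}^*}\circ\Phi$. Iterating these gives exactly your corrected form $\B^p=\B\circ\Ad_{V_{\rho^\perp}}\circ\Phi_{\rho^\perp}^{p-1}\circ\Ad_{V_{\rho^\perp}^*}$, with $\B$ on the left rather than the right. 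Theorem~\ref{thm:Perron-splitting-full} gives no output-support statement for $\B$, so the asymmetry with $\A$ is real.

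Downstream nothing breaks. Corollary~\ref{cor:recursive-EB-bound} only needs $\B^{n_2+1}\in\EB$, which your corrected form gives by the mapping-cone property. So the bound $n_1+n_2+1$ and Theorem~\ref{thm:intro-main} stand.

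One small caution about your second alternative, the stated form for $\B':=\Ad_{\Pi_\rho^\perp}\circ\B$. It does not by itself feed into that corollary, because $\Phi\neq\A+\B'$ and the expansion $\Phi^n=\sum_j\A^{n-j}\circ\B^j$ involves powers of $\B$ itself. To convert, you would need $\B=\B\circ\Ad_{\Pi_\rho^\perp}$ again, which just returns you to the first corrected form. So state and use that one.
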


\begin{proof}
Theorem \ref{thm:Perron-splitting-full} first implies that
\begin{equation} \label{eq:ABsupport1}
    \A = \Ad_{\Pi_{\rho}}\circ \A = \Ad_{V_{\rho}}\circ \Ad_{V_{\rho}^*}\circ \A, \qquad \B = \B\circ \Ad_{\Pi_{\rho}^{\perp}} = \B \circ \Ad_{V_{\rho^{\perp}}}\circ \Ad_{(V_{\rho^{\perp}})^*}.
\end{equation}
Since $(V_{\rho^{\perp}})^*V_{\rho}=0$, the decomposition $\Phi=\A+\B$ further implies that
\begin{equation} \label{eq:ABsupport2}
\begin{split}
    \A\circ \Ad_{V_{\rho}} &= (\A+ \B \circ \Ad_{V_{\rho^{\perp}}}\circ \Ad_{(V_{\rho^{\perp}})^*}) \circ \Ad_{V_{\rho}} = \Phi \circ \Ad_{V_{\rho}},\\
    \Ad_{(V_{\rho^{\perp}})^*} \circ \B &= \Ad_{(V_{\rho^{\perp}})^*} \circ (\Ad_{V_{\rho}}\circ \Ad_{V_{\rho}^*}\circ \A + \B) = \Ad_{(V_{\rho^{\perp}})^*} \circ \Phi.
\end{split}
\end{equation}
Therefore, by combining Eqs. \eqref{eq:PPTCorner} and \eqref{eq:ABsupport2}, iteration of the first relation in \eqref{eq:ABsupport1} gives
    $$\A^p = \Ad_{V_{\rho}} \circ (\Ad_{V_{\rho}^*}\circ \A \circ \Ad_{V_{\rho}})^{p-1}\circ \Ad_{V_{\rho}^*} \circ\A = \Ad_{V_{\rho}} \circ \Phi_{\rho}^{p-1}\circ \Ad_{V_{\rho}^*} \circ\A,$$
showing the first identity. The second identity follows analogously.
\end{proof}

\begin{corollary}
\label{cor:recursive-EB-bound}
Let
$\Phi_{\rho}$ and $\Phi_{\rho^{\perp}}$ be the PPT corner maps defined as in Eq. \eqref{eq:PPTCorner}. Then we have
    $$n_{\rm EB}(\Phi) \leq n_{\rm EB}(\Phi_{\rho})+n_{\rm EB}(\Phi_{\rho^{\perp}})+1.$$
\end{corollary}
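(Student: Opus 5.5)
Set $a:=n_{\rm EB}(\Phi_{\rho})$ and $b:=n_{\rm EB}(\Phi_{\rho^{\perp}})$; if either is infinite there is nothing to prove, so assume both are finite and put $n:=a+b+1$. The plan is to expand $\Phi^n=(\A+\B)^n$ into words in $\A$ and $\B$ and use $\B\circ\A=0$ from Theorem~\ref{thm:Perron-splitting-full}. Because of this relation, every word containing the pattern $\B\A$ (a $\B$ applied after an $\A$) is zero. Reading compositions right to left, the only surviving words are therefore
$$\A^{j}\circ\B^{n-j},\qquad 0\le j\le n,$$
so that
$$\Phi^n=\sum_{j=0}^n \A^j\circ\B^{n-j}.$$
Since $\EB$ is a convex cone, it suffices to show that each such word lies in $\EB$.

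We use the corner factorizations of Proposition~\ref{prop:corner-factorization}. For $j\ge a+1$ we have
$$\A^j=\Ad_{V_\rho}\circ\Phi_\rho^{\,j-1}\circ\Ad_{V_\rho^*}\circ\A.$$
Here $j-1\ge a$, so $\Phi_\rho^{\,j-1}\in\EB$, because $\EB$ is a mapping cone and $\Phi_\rho$ is CP. Composing with the CP maps $\Ad_{V_\rho}$, $\Ad_{V_\rho^*}\circ\A$ and $\B^{n-j}$ keeps the word in $\EB$ by \eqref{eq:MappingCone}. Symmetrically, when $n-j\ge b+1$ the factor $\B^{n-j}$ contains $\Phi_{\rho^\perp}^{\,n-j-1}$ with $n-j-1\ge b$, and the word is again EB. Since $j+(n-j)=a+b+1$, we cannot have both $j\le a$ and $n-j\le b$, so one of the two cases always applies. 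This covers all $j$, including the extreme words $\A^n$ and $\B^n$.

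Two details need care. The first is the conventions in the degenerate cases. If $H_\rho^\perp=0$ (or $H_\rho=0$), the corresponding corner is a map on a zero space, and we should read $n_{\rm EB}$ of such a map as $0$ or $1$, or simply note that $\B=0$ and the bound follows from the $\A$ part alone. The second is the exponent $p-1=0$, where $\Phi_\rho^0$ is the identity on $B(H_\rho)$ and is not EB. This is exactly why the $+1$ appears: it forces the exponent on the relevant corner to be at least $a$ (respectively $b$), which is at least $1$. No real obstacle is expected beyond checking that the expansion of $(\A+\B)^n$ is ordered correctly, i.e.\ that $\B\circ\A=0$ kills precisely the words in which $\B$ is applied after $\A$.
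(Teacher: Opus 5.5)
Your proposal is correct and follows essentially the same route as the paper: expand $(\A+\B)^n$ using $\B\circ\A=0$ to get $\sum_{j}\A^{j}\circ\B^{n-j}$, use Proposition~\ref{prop:corner-factorization} to see that $\A^{p}\in\EB$ for $p\geq a+1$ and $\B^{q}\in\EB$ for $q\geq b+1$, and conclude by pigeonhole from $j+(n-j)=a+b+1$. One aside is slightly off: $\Phi_\rho^{0}=\id$ is in fact EB when $\dim H_\rho=1$, but this does not affect the argument.
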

\begin{proof}
It suffices to consider the case in which $n_{\rm EB}(\Phi_{\rho}), n_{\rm EB}(\Phi_{\rho^{\perp}})<\infty$. Suppose $n_1,n_2\geq 1$ satisfies $\Phi_{\rho}^{n_1}\in \EB$ and $\Phi_{\rho^{\perp}}^{n_2}\in \EB$.
By Proposition~\ref{prop:corner-factorization}, one first has
\begin{align*}
    \A^{n_1+1} &=\Ad_{V_{\rho}}\circ\Phi_{\rho}^{n_1}\circ \Ad_{V_{\rho}^*}\circ \A\in \EB,\\
    \B^{n_2+1} &=\Ad_{V_{\rho}^{\perp}}\circ\Phi_{\rho^{\perp}}^{n_2}\circ\Ad_{(V_{\rho}^{\perp})^*} \circ \B \in \EB.
\end{align*}
On the other hand, the decomposition $\Phi=\A+\B$ and the relation $\B\circ \A$ from Theorem \ref{thm:Perron-splitting-full} imply
    $$\Phi^n = (\A+\B)^n = \sum_{j=0}^n \A^{n-j}\circ \B^j$$
for all $n\geq 1$. In particular, if $n=n_1+n_2+1$, then $\A^{n-j}\in \EB$ or $\B^{j}\in \EB$ for every $j$, and thus $\Phi^n\in \EB$. This shows that $n_{\rm EB}(\Phi)\leq n_1+n_2+1$, and taking $n_1=n_{\rm EB}(\Phi_{\rho})$ and $n_2=n_{\rm EB}(\Phi_{\rho^{\perp}})$ completes the proof.
\end{proof}

As another important consequence, we show that the entanglement dimensionality of $\Phi$ is governed by the rank of $\rho$.

\begin{theorem}
\label{thm:singular-support-SP}
The Schmidt number of $C_{\Phi}$ is bounded by the $\max(\rank \Pi_{\rho}, \rank \Pi_{\rho^{\perp}})$. In particular, if $\Pi_{\rho}\neq I_d$, then
    $$\Phi\in\SP_{\max\{\rank \rho, d-\rank \rho\}}
 \subseteq\SP_{d-1}.$$
\end{theorem}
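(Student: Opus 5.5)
The plan is to read off the Schmidt number bound directly from the splitting $\Phi=\A+\B$ of Theorem~\ref{thm:Perron-splitting-full}, using the fact that each Choi matrix is supported on a tensor product subspace in which one tensor factor is small. Write $r:=\rank\rho=\rank\Pi_\rho$. Since $\SP_k$ is a convex cone and $C_\Phi=C_\A+C_\B$ with both summands positive semidefinite, it suffices to show
\[
\SN(C_\A)\leq r \quad\text{and}\quad \SN(C_\B)\leq d-r .
\]
Then $\SN(C_\Phi)\leq\max(r,d-r)$, and $\Phi$ is CP, so $\Phi\in\SP_{\max(r,d-r)}$.

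\emph{The map $\A$.} By Eq.~\eqref{eq:Asupport}, $C_\A=(I_d\otimes\Pi_\rho)C_\A(I_d\otimes\Pi_\rho)$. Hence the range of $C_\A$ lies in $\Comp^d\otimes H_\rho$. Take any spectral decomposition $C_\A=\sum_i\proj{\xi_i}$ with each $\xi_i\in{\rm Ran}(C_\A)$. Every such $\xi_i$ lies in $\Comp^d\otimes H_\rho$, so its Schmidt rank is at most $\dim H_\rho=r$. (One way to see this: the reduced operator of $\proj{\xi_i}$ on the second factor is supported on $H_\rho$.) Therefore $C_\A\in\sch_r$.

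\emph{The map $\B$.} By construction $C_\B$ is the block $Q-Z^*P^{-1}Z$ in the corner $H_\rho^\perp\otimes\Comp^d$, so its range lies in $H_\rho^\perp\otimes\Comp^d$. The same argument, now applied to the first tensor factor, shows that every vector in a rank-one decomposition of $C_\B$ has Schmidt rank at most $\dim H_\rho^\perp=d-r$. Hence $C_\B\in\sch_{d-r}$.

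\emph{Conclusion.} Summing gives $C_\Phi\in\sch_{\max(r,d-r)}$, which is the first claim. For the ``in particular'' part: $\rho\neq0$ gives $r\geq1$, and $\Pi_\rho\neq I_d$ gives $r\leq d-1$. Thus $\max(r,d-r)\leq d-1$, and the inclusion $\SP_{\max(r,d-r)}\subseteq\SP_{d-1}$ follows from monotonicity of the cones $\SP_k$.

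There is no real obstacle here. The only point that needs care is the support statement for $C_\A$ on the output side, and this is already established in Eq.~\eqref{eq:Asupport}. That statement relies on the PPT-based two-sided support result, Proposition~\ref{prop:two-sided-support}; the rest is the elementary fact that a vector in $H\otimes K$ has Schmidt rank at most $\min(\dim H,\dim K)$.
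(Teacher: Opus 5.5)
Your proposal is correct and is essentially the paper's proof: you bound $\SN(C_{\A})\leq\rank\rho$ and $\SN(C_{\B})\leq d-\rank\rho$ from the supports $\Comp^d\otimes H_{\rho}$ and $H_{\rho}^{\perp}\otimes\Comp^d$ (Eq.~\eqref{eq:Asupport} and the construction of $C_{\B}$), then combine them using convexity of the Schmidt-number cones. Your explicit check that $1\leq\rank\rho\leq d-1$ gives $\max(\rank\rho,d-\rank\rho)\leq d-1$ simply spells out a step the paper leaves implicit.
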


\begin{proof}
Recall that the positive operator
$C_{\A}$ is supported on ${\rm Ran}(I_d\otimes \Pi_{\rho}) = \Comp^d\otimes H_{\rho}$ by Eq. \eqref{eq:Asupport}, and hence 
    $$\SN(C_{\A})\leq \dim H_{\rho} = \rank \rho.$$
Similarly, $C_{\B}$ is supported on ${\rm Ran}(\Pi_{\rho^{\perp}}\otimes I_d) = H_{\rho}^{\perp}\otimes \Comp^d$, and thus
    $$\SN(C_{\B})\leq \dim H_{\rho}^{\perp} = d-\rank \rho.$$
Therefore, we conclude that
    $$\SN(C_{\Phi}) = \SN(C_{\A}+C_{\B}) \leq \max \big(\SN(C_{\A}), \SN(C_{\B})\big) \leq \max (\rank \rho, d-\rank \rho).$$
\end{proof}

\section{Every PPT map is eventually entanglement breaking}
\label{sec:eventual}

This section is devoted to proving our first main result, Theorem~\ref{thm:intro-main}, by induction on the matrix dimension. Indeed, the singular-support case for the eigenstate $\rho$ is handled by the Perron-Frobenius
splitting from Section~\ref{sec:splitting}; the full-support case is exactly
Theorem~\ref{thm:HRSF-full-support}.

\begin{proof}[\textbf{Proof of Theorem \ref{thm:intro-main}}]
We proceed by induction on $d\geq 2$. The base case $d=2$ is clear, since every PPT map on $M_2$ is EB and hence $n_{\rm EB}(\Phi)=1$. Now assume that the conclusion holds on $M_{d'}$ for every $d'<d$, and let
$\Phi:M_d\to M_d$ be PPT. Let us take $\lambda=r_{\rm sp}(\Phi)\geq 0$.
We distinguish four cases.

\medskip
\noindent\textbf{Case 1: $\lambda=0$.}
As a linear operator on the $d^2$-dimensional vector space $M_d$,
$\Phi$ has spectrum $\{0\}$. Its characteristic polynomial is
therefore $t^{d^2}$, and the Cayley--Hamilton theorem gives
    $$\Phi^{d^2}=0\in \EB.$$

\medskip
\noindent\textbf{Case 2: a right Perron eigenmatrix is singular.}
Assume $\lambda>0$.  By Theorem~\ref{thm:PF}, there exists a nonzero state
$\rho\in M_d^+$ satisfying $\Phi(\rho)=\lambda\rho.$ If $\rho$ is singular, then
$\Pi_{\rho}={\rm supp\,}\rho$ is a nonzero proper projection with the range $H_{\rho}={\rm Ran}(\rho)\subsetneq \Comp^d$.  Furthermore, the corner maps
    $$\Phi_{\rho}:B(H_{\rho})\to B(H_{\rho}), \quad \Phi_{\rho^{\perp}}:B(H_{\rho})\to B(H_{\rho})$$
defined in Eq. \eqref{eq:PPTCorner} are PPT and act on matrix algebras of dimensions $\dim H_{\rho}, \dim H_{\rho}^{\perp}<d$.  Therefore, the induction hypothesis yields that $n_{\rm EB}(\Phi_{\rho})<\infty$ and $n_{\rm EB}(\Phi_{\rho^{\perp}})$, which further certifies $n_{\rm EB}(\Phi)\leq n_{\rm EB}(\Phi_{\rho})+n_{\rm EB}(\Phi_{\rho^{\perp}})+1<\infty$ by Corollary~\ref{cor:recursive-EB-bound}.

\medskip
\noindent\textbf{Case 3: the chosen right Perron eigenstate is strictly positive,
but a left Perron eigenmatrix is singular.}
By Theorem~\ref{thm:PF}, choose a nonzero matrix $X\geq 0$ with $\Phi^*(X)=\lambda X.$
If $X$ is singular, then it is a singular right Perron eigenmatrix of the
PPT map $\Phi^*$. Since $\Phi^n$ is EB if and only if $(\Phi^*)^n = (\Phi^n)^*$ is EB, the Case~2 applied to $\Phi^*$ yields that
    $$n_{\rm EB}(\Phi) = n_{\rm EB}(\Phi^*)<\infty.$$

\medskip
\noindent\textbf{Case 4: full left and right Perron support.}
This is precisely the case from Theorem \ref{thm:HRSF-full-support}.
\end{proof}

We record two consequences of Theorem \ref{thm:intro-main}. A straightforward implication is that the eventual EB property can be characterized eventual PPT property for all CP maps.

\begin{corollary} \label{cor:PPTInd}
For any CP map $\Phi:M_d\to M_d$, $\Phi$ is eventually EB if and only if $\Phi$ is eventually PPT, that is,
    $$n_{\rm PPT}(\Phi):=\inf \{n\geq 1: \Phi^n\in \PPT\}<\infty.$$
Furthermore, we have in this case,
    $$n_{\rm EB}(\Phi) \leq n_{\rm PPT}(\Phi)\, n_{\rm EB}(\Phi^{n_{\rm PPT}}).$$
\end{corollary}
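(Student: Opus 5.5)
The plan is to prove the two directions separately and then read off the quantitative bound from the forward direction.

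For ``eventually EB $\Rightarrow$ eventually PPT'': if $\Phi^n\in\EB$ for some $n$, then $\Phi^n\in\PPT$ because $\EB\subseteq\PPT$ (equivalently $\Sep\subseteq\ppt$ at the level of Choi matrices). Hence $n_{\rm PPT}(\Phi)\leq n_{\rm EB}(\Phi)<\infty$.

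For the converse and the bound, suppose $m:=n_{\rm PPT}(\Phi)<\infty$, and set $\Psi:=\Phi^m$, which is a PPT map $M_d\to M_d$.
\begin{enumerate}
\item By Theorem~\ref{thm:intro-main}, $k:=n_{\rm EB}(\Psi)$ is finite, so $\Psi^k=\Phi^{mk}\in\EB$.
\item This gives $n_{\rm EB}(\Phi)\leq mk=n_{\rm PPT}(\Phi)\,n_{\rm EB}(\Phi^{n_{\rm PPT}})$, which is exactly the stated estimate.
\end{enumerate}
In this direction we never need to know whether $\Phi^n$ is PPT for $n$ larger than $m$; only the single power $\Phi^m$ is used. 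If one also wants the fact that PPT persists for all larger powers, it follows from the mapping cone property \eqref{eq:MappingCone}. Indeed, $\Phi^{n}=\Phi^{n-m}\circ\Phi^m$ with $\Phi^{n-m}$ CP, so $\Phi^n\in\PPT$ for every $n\geq m$.

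I do not expect a real obstacle here. The only points needing care are the following.
\begin{itemize}
\item We apply Theorem~\ref{thm:intro-main} to $\Phi^m$ rather than to $\Phi$, since $\Phi$ itself need not be PPT.
\item We use that $\Phi^{mk}\in\EB$ means the first EB time is at most $mk$; by the definition of $n_{\rm EB}$ as an infimum this is immediate.
\end{itemize}
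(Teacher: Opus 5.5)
Your proposal is correct and matches the paper's proof: $\EB\subseteq\PPT$ gives the forward direction, and applying Theorem~\ref{thm:intro-main} to the PPT map $\Psi=\Phi^{m}$ with $m=n_{\rm PPT}(\Phi)$ yields $\Phi^{mk}=\Psi^{k}\in\EB$, which is the stated bound. Your extra remark that $\Phi^n\in\PPT$ for all $n\geq m$ by the mapping cone property is correct but not needed for the argument.
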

\begin{proof}
Every entanglement breaking map is PPT, so eventual entanglement breaking implies eventual PPT. Conversely, put $m=n_{\rm PPT}(\Phi)$ and $\Psi=\Phi^m\in \PPT$. By Theorem~\ref{thm:intro-main}, $r:=n_{\rm EB}(\Psi)<\infty$. Hence $\Phi^{mr}=\Psi^r\in\EB$, which proves both the converse and the stated estimate.
\end{proof}

Furhtermore, Theorem \ref{thm:intro-main} also controls compositions that factor through a
possibly smaller intermediate algebra.

\begin{corollary}
\label{cor:rectangular-eventual}
Let $\Phi_1:M_d\to M_k$ and $\Phi_2:M_k\to M_d$
be two PPT maps. Then we have
    $$n_{\rm EB}(\Phi_2\circ \Phi_1)\leq 1+ n_{\rm EB}(\Phi_1\circ \Phi_2)<\infty.$$
\end{corollary}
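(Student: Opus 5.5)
Set $\Psi:=\Phi_1\circ\Phi_2:M_k\to M_k$. It is PPT, since $\PPT$ is a mapping cone and a composition of PPT maps is CP and co-CP. \cref{thm:intro-main} then gives $m:=n_{\rm EB}(\Psi)<\infty$, which is the finiteness part of the claim.

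For the inequality, I write the power of the reversed composition by regrouping:
$$(\Phi_2\circ\Phi_1)^{m+1}=\Phi_2\circ(\Phi_1\circ\Phi_2)^{m}\circ\Phi_1=\Phi_2\circ\Psi^m\circ\Phi_1 .$$
Since $\Psi^m\in\EB(k,k)$ and $\Phi_1,\Phi_2$ are CP, the mapping cone property \eqref{eq:MappingCone} gives $\Phi_2\circ\Psi^m\circ\Phi_1\in\EB(d,d)$. Hence $n_{\rm EB}(\Phi_2\circ\Phi_1)\le m+1$.

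The only point needing care is that \eqref{eq:MappingCone} is stated with equal-size matrix algebras implicit, while here the CP maps are rectangular, $M_d\to M_k$ and $M_k\to M_d$. The extension is routine: for CP $\Psi_r:M_d\to M_k$, $\Psi_l:M_k\to M_d$ and $\Theta\in\EB(k,k)$, for any $n$ the map $\id_n\otimes\Psi_r$ sends $(M_n\otimes M_d)^+$ into $(M_n\otimes M_k)^+$. Then $\id_n\otimes\Theta$ maps this into $\Sep(n\otimes k)$ by the characterization of $\EB$ recalled in Section~\ref{sec:prelim}. Finally, $\id_n\otimes\Psi_l$ maps separable states to separable states, because it sends a product $A\otimes B$ to $A\otimes\Psi_l(B)$ with $\Psi_l(B)\ge0$. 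No real obstacle is expected beyond invoking \cref{thm:intro-main} for the square map $\Psi$ on $M_k$, which needs $k\ge 1$ and no relation between $k$ and $d$.
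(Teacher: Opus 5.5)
Your proof is correct and follows the paper's argument exactly: you apply Theorem~\ref{thm:intro-main} to the square PPT map $\Phi_1\circ\Phi_2$ on $M_k$, then use the regrouping $(\Phi_2\circ\Phi_1)^{m+1}=\Phi_2\circ(\Phi_1\circ\Phi_2)^{m}\circ\Phi_1$ together with the mapping cone property. Your extra check that the mapping cone property extends to rectangular CP factors is a sound detail that the paper leaves implicit.
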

\begin{proof}
This follows from Theorem \ref{thm:intro-main} and the iteration relation
    $$(\Phi_2\circ \Phi_1)^{n+1}=\Phi_2\circ (\Phi_1\circ \Phi_2)^n\circ \Phi_1, \qquad n\geq 0.$$
\end{proof}

Although we have shown that every PPT map has finite EB index, it remains unknown whether there exists a channel-independent bound for the EB index. A natural weaker question is the following:

\begin{question}
For each $d\geq 4$, does there exist finite number $n(d)\geq 2$ such that $\Phi^{n(d)}\in \EB$ for every PPT map $\Phi:M_d\to M_d$?
\end{question}

The above question asks whether the value
    $$\sup\{n_{\rm EB}(\Phi):\Phi\in \PPT(d,d)\}$$
is finite or not, which does not follow directly from Theorem \ref{thm:intro-main}. In the following section, we provide a large family of nontrivial PPT maps whose EB index is absolutely bounded by $3$, regardless of the dimensions.

\section{entanglement breaking-composability via low degree of entanglement}
\label{sec:bridges}

The recent work \cite{AL26} introduces the notion of \emph{EB-composability} as a framework for composition problems extending the PPT-squared conjecture. Let $\mathcal{C}_1\subseteq \PO(d_A,d)$ and $\mathcal{C}_2\subseteq \PO(d,d_B)$ be two mapping cones. Then the pair $(\mathcal{C}_2, \mathcal{C}_1)$ is called EB-composable if $\Phi_2\circ \Phi_1\in \EB(d_A,d_B)$ for any $\Phi_1\in \mathcal{C}_1$ and $\Phi_2\in \mathcal{C}_2$. The PPT-squared conjecture asks whether $(\PPT(d,d_B), \PPT(d_A,d))$ is EB-composable for all $d_A,d_B,d\geq 2$, which is further equivalent to the EB-composability of $(\PPT(d,d),\PPT(d,d))$ for every $d\geq 2$ \cite{CMHW19}. The qutrit case of the PPT-squared conjecture was established in \cite{CMHW19,CYT19}. More precisely, those proofs further derive EB-composable pairs
    $$(\EB_2(3,3),\EB_2(3,3)) \quad \text{and} \quad (\SP_2(3,3),\PPT(3,3)),$$
together with the inclusions $\PPT(3,3)\subsetneq \EB_2(3,3)$ and $\PPT(3,3)\subsetneq \SP_2(3,3))$, respectively in \cite{CMHW19} and \cite{CYT19}. Here
    $$\EB_2(d_A,d_B):=\{\Phi\in \PO(d_A,d_B):(\id_2\otimes \Phi)(\rho)\in \Sep(2\otimes d_B)\;\forall\,\rho\in (M_{2}\otimes M_{d_A})^+\}$$
denotes the mapping cone of \emph{$2$-entanglement breaking maps}. The second EB-composable pair above was further enlarged in \cite{AL26} to the pair $(\SP_2(3,3),\UND_1(3,3))$, where
    $$\UND_1(d_A,d_B):=\{\Phi\in \CP(d_A,d_B): \top\circ \Phi\in \PO_2(d_A,d_B)\}$$
denotes the cone of CP maps whose Choi matrices are $1$-copy undistillable \cite{HHH98}. We also recall the recent result of \cite{Henry26}, which settles the PPT-squared conjecture when the intermediate system is a qubit.

\begin{theorem} \label{thm:Hen26}
Let $\Phi_1:M_{d_A}\to M_2$ and $\Phi_2: M_2 \to M_{d_B}$ be two PPT linear maps. Then the composition 
    $$\Phi_2\circ \Phi_1:M_{d_A}\to M_{d_B}$$
is entanglement breaking. In other words, the pair $(\PPT(2,d_B), \PPT(d_A,2))$ is EB-composable for all $d_A,d_B\geq 2$.
\end{theorem}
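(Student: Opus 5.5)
The plan is to reduce the claim to a duality statement about positive maps with qubit output. By the Horodecki characterization \cite{HSR03}, $\Phi_2\circ\Phi_1\in\EB(d_A,d_B)$ holds if and only if
$$\Tr\big[W\,(\id_{d_A}\otimes\Phi_2)(C_{\Phi_1})\big]\geq 0$$
for every block-positive witness $W\in M_{d_A}\otimes M_{d_B}$. Moving $\Phi_2$ to the other side gives
$$\Tr\big[W\,(\id\otimes\Phi_2)(C_{\Phi_1})\big]=\Tr\big[(\id\otimes\Phi_2^*)(W)\,C_{\Phi_1}\big].$$
Here $W$ is the Choi matrix of a positive map $P:M_{d_A}\to M_{d_B}$, so $\widetilde W:=(\id\otimes\Phi_2^*)(W)$ is the Choi matrix of the positive map $\Phi_2^*\circ P:M_{d_A}\to M_2$. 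Since $C_{\Phi_1}\in\ppt(d_A\otimes 2)$, it suffices to prove the following lemma.

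\emph{Key lemma.} For every PPT map $Q:M_{d_B}\to M_2$ and every positive map $P:M_{d_A}\to M_{d_B}$, the composition $Q\circ P$ is decomposable.

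Given the lemma, $\widetilde W$ is a decomposable witness: it has the form $A+B^{\Gamma}$ with $A,B\geq0$. Hence $\Tr[\widetilde W\,C_{\Phi_1}]=\Tr[A\,C_{\Phi_1}]+\Tr[B\,C_{\Phi_1}^{\Gamma}]\geq 0$, and the theorem follows.

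To prove the key lemma I would use the mapping cone property \eqref{eq:MappingCone} together with the structure of PPT states on $2\otimes n$. Decomposable maps are not a mapping cone on the positive side, so I cannot simply absorb $P$. Instead I would work with $Q^*:M_2\to M_{d_B}$, whose Choi matrix lies in $\ppt(2\otimes d_B)$. By the Kraus--Cirac--Karnas--Lewenstein theory of $2\otimes N$ PPT states, one can subtract product vectors from $C_{Q^*}$ until what remains is an edge state $\sigma$ whose range contains no product vector $x\otimes y$ with $\bar x\otimes y$ in the range of $\sigma^\Gamma$. The separable part gives an EB summand of $Q$. For such a summand $Q\circ P$ is EB, hence decomposable, because $\EB$ is a mapping cone also under composition with positive maps: $\EB\circ\PO\subseteq\EB$. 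The lemma therefore reduces to the edge part.

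For the edge part I would exploit that the input is a qubit. A $2\otimes N$ PPT state is a $2\times2$ block matrix $\begin{pmatrix}A&B\\B^*&C\end{pmatrix}$ with $B$ and $B^*$ playing symmetric roles under partial transpose. The plan is to show that positivity of both this matrix and its partial transpose forces a factorization of $Q^*$, after conjugating by a $2\times 2$ matrix, as a ``Schwarz-type'' map of the form $X\mapsto \Ad_{K_1}(X)+\Ad_{K_2}(X^{\top})$ with $K_1,K_2:\Comp^2\to\Comp^{d_B}$, plus an EB remainder. Composing the adjoint of such a map with any positive $P$ yields a map that is CP plus co-CP by Størmer's theorem, since positive maps between $M_2$ and $M_2$-compressed corners are decomposable \cite{stormer1963positive}.

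I expect this last step to be the main obstacle: showing that every qubit-input PPT map splits into pieces that each survive composition with an arbitrary positive map. My first attempt would be induction on $d_B$ using the edge-state rank bound $\rank\sigma\leq d_B$ for $2\otimes d_B$ edge states. The idea is to peel off one dimension of the output at a time via the Schur complement (Lemma~\ref{lem:SchurComp}) applied to the $2\times2$ block form of $C_{Q^*}$, mirroring the splitting argument of Section~\ref{sec:splitting}. If that fails, I would fall back to a direct witness computation. There I would parametrize $\widetilde W$ by the $2\times2$ block structure and verify decomposability through the Woronowicz criterion on the output qubit.
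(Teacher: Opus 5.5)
The paper gives no proof of this statement: it is quoted from \cite{Henry26} and used only as a black box in Theorem~\ref{thm:DSP-bridge}, so your argument has to stand on its own. Your opening reduction is correct, as is $\EB\circ\PO\subseteq\EB$ for the separable part. However, the reduction is only the dual restatement of the claim. Because $\EB^{\circ}=\PO$, $\PPT^{\circ}=\DEC$ and $\PPT$ is closed under adjoints, your key lemma is \emph{equivalent} to the theorem, so no difficulty has been removed. The same duality also tells you exactly what each piece of a splitting $Q=\sum_i Q_i$ must satisfy: $Q_i\circ\PO(d_A,d_B)\subseteq\DEC(d_A,2)$ holds if and only if $Q_i^*\circ\Phi_1\in\EB$ for every $\Phi_1\in\PPT(d_A,2)$. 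Everything therefore rests on the edge-part step, and that step fails.

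Grant the factorization $Q^*=\Ad_{K_1}+\Ad_{K_2}\circ\top$, which you assert but do not derive from the PPT condition. You then get $Q\circ P=\Ad_{K_1^*}\circ P+\top\circ\Ad_{K_2^*}\circ P$, and each summand is, up to a transpose, a positive map $M_{d_A}\to M_2$ with $d_A$ arbitrary. St{\o}rmer's theorem \cite{stormer1963positive} concerns $M_2\to M_2$ only, and Woronowicz \cite{Wor76} adds only $M_2\leftrightarrow M_3$. Compressing the output to a qubit does nothing to the input. For $d_A\geq 4$ there are non-decomposable positive maps $M_{d_A}\to M_2$, dual to the existence of PPT entangled states in $4\otimes 2$ \cite{Wor76,Hor97}.

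These pieces cannot be rescued either. At least one $K_i$ has rank $2$, since otherwise the edge part would be EB. A rank-two piece $\Ad_{K_1}$ is just $\id_2$ followed by an injective embedding, so $\Ad_{K_1}\circ\Phi_1\in\EB$ if and only if $\Phi_1\in\EB$. Concretely, taking $P=\Ad_{K_1(K_1^*K_1)^{-1}}\circ P'$ gives $\Ad_{K_1^*}\circ P=P'$ for an arbitrary $P'\in\PO(d_A,2)$. Dually, $\Ad_{K_1}\circ\PPT(d_A,2)\subseteq\EB$ would force $\PPT(d_A,2)=\EB(d_A,2)$, which is false for $d_A\geq 4$. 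The co-CP piece fails in the same way, since $\top\circ\PPT=\PPT$.

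In other words, splitting $Q$ into a CP part and a co-CP part throws away the joint positivity of $C_Q$ and $C_Q^{\Gamma}$, which is exactly what the theorem needs. Your fallbacks do not address this. Woronowicz's criterion says nothing about $M_{d_A}\to M_2$ for $d_A\geq 4$. The Schur-complement splitting of Section~\ref{sec:splitting} depends on an invariant corner produced by a singular Perron eigenstate, which a general $Q^*:M_2\to M_{d_B}$ does not possess. The missing ingredient is an argument that uses the PPT property of $\Phi_1$ and of $\Phi_2$ simultaneously across the qubit. Decomposing one factor into pieces that must each be compatible with all positive maps cannot work.
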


In this section, we investigate EB-composability through \emph{multiple} mapping cones in order to relate it to the entanglement breaking index of PPT maps. A natural $n$-tuple extension of EB-composability is as follows.
\begin{definition}
Let $d_0,\ldots, d_n\geq 2$ and let $\mathcal{C}_i\in \PO(d_{i-1},d_{i})$ be a mapping cone for each $i=1,\ldots, n$. Then the tuple $(\mathcal{C}_n,\ldots, \mathcal{C}_1)$ is called \emph{EB-composable} if $\Phi_n\circ \cdots \circ\Phi_1\in \EB(d_0,d_n)$ whenever $\Phi_i\in \mathcal{C}_i$ for each $i$. In other words, we have
    $$\mathcal{C}_n\circ \cdots \circ \mathcal{C}_1:=\{\Phi_n\circ \cdots \circ \Phi_1:\Phi_i\in \mathcal{C}_i\;\forall\,i\}\subseteq \EB.$$
\end{definition}

For our purpose and for simplicity, we mainly consider the case $d_0=\cdots= d_n=d$ although many of the properties presented in this section hold true without this assumption. Let us first collect several basic properties, which follow directly from the definition and the mapping cone property.

\begin{proposition} \label{prop:DSPBasic}
Let $\mathcal{C}_n,\ldots, \mathcal{C}_1$ be mapping cones.
\begin{enumerate}
    \item If $(\mathcal{C}_n,\ldots, \mathcal{C}_1)$ is EB-composable, then the $(n-1)$-tuple $(\mathcal{C}_{n},\ldots, \mathcal{C}_{i+2},\widetilde{\mathcal{C}}_{i+1,i}, \mathcal{C}_{i-1},\ldots, \mathcal{C}_1)$ is EB-decomposable, where
        $$\widetilde{\mathcal{C}}_{i+1,i}:={\rm conv}\{\Phi_{i+1}\circ \Phi_i:\Phi_i\in \mathcal{C}_{i+1},\;\; \Phi_i\in \mathcal{C}_i\}.$$

    \item If $(\mathcal{C}_n,\ldots, \mathcal{C}_1)$ is EB-composable and if     $\id_d\in \mathcal{C}_i$ for some $i$, then $(\mathcal{C}_{n},\ldots, \mathcal{C}_{i+1}, \mathcal{C}_{i-1},\ldots, \mathcal{C}_1)$ is EB-composable.

    \item If $\mathcal{C}_i=\CP$ for some $i$, then $(\mathcal{C}_n,\ldots, \mathcal{C}_1)$ is EB-composable if and only if $(\mathcal{C}_{n},\ldots, \mathcal{C}_{i+1}, \mathcal{C}_{i-1},\ldots, \mathcal{C}_1)$ is EB-composable.

    \item If $(\mathcal{C}_n,\ldots, \mathcal{C}_1)$ is EB-composable, then for any CP map $\Phi\in \bigcap_{i=1}^n \mathcal{C}_i$, we have $n_{\rm EB}(\Phi)\leq n$.
\end{enumerate}
\end{proposition}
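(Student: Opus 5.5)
Each of the four items is a direct manipulation of the definition of EB-composability combined with the mapping cone property \eqref{eq:MappingCone}. I will take the parts in order, interpreting ``EB-decomposable'' in (1) as EB-composable.

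For (1), I fix $\Phi_j\in\mathcal{C}_j$ for $j\neq i,i+1$ and an element $\Psi=\sum_\alpha t_\alpha \Phi^{(\alpha)}_{i+1}\circ\Phi^{(\alpha)}_i$ of $\widetilde{\mathcal{C}}_{i+1,i}$, with $t_\alpha\ge 0$. Composing with the other maps is linear in $\Psi$, so
$$\Phi_n\circ\cdots\circ\Psi\circ\cdots\circ\Phi_1=\sum_\alpha t_\alpha\,\Phi_n\circ\cdots\circ\Phi^{(\alpha)}_{i+1}\circ\Phi^{(\alpha)}_i\circ\cdots\circ\Phi_1 .$$
Each summand lies in $\EB$ by hypothesis, and $\EB$ is a convex cone, so the sum is in $\EB$.

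For (2), I insert $\Phi_i=\id_d\in\mathcal{C}_i$ into the original $n$-tuple composition. This immediately gives EB-composability of the shortened tuple.

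For (3), the ``if'' direction is the nontrivial one. Take $\Phi_i\in\CP$ and set $\Phi'_{i-1}:=\Phi_i\circ\Phi_{i-1}$. Since $\mathcal{C}_{i-1}$ is a mapping cone, $\Phi'_{i-1}\in\mathcal{C}_{i-1}$. The full composition then equals the shortened composition with $\Phi_{i-1}$ replaced by $\Phi'_{i-1}$, which is EB by assumption. If $i=1$, I absorb $\Phi_1$ on the left of $\Phi_2$ instead. If $n=1$, the shortened tuple is empty; I read it as $\{\id\}$, and then the statement becomes the triviality that $\CP\subseteq\EB$ fails unless trivial, so I would simply exclude this case. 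The ``only if'' direction follows from (2), because $\id_d\in\CP$.

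For (4), I take $\Phi_i=\Phi$ for every $i$, which is allowed since $\Phi$ lies in each $\mathcal{C}_i$. This gives $\Phi^n\in\EB$, hence $n_{\rm EB}(\Phi)\le n$.

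The only step that needs care is the bookkeeping in (3), namely checking that the absorbed map stays in the neighbouring cone. This is exactly the bimodule property \eqref{eq:MappingCone}, and it uses that the dimensions match: $d_{i-1}=d_i=d$ is assumed throughout.
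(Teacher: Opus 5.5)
Your proposal is correct and is exactly the argument the paper has in mind: the paper gives no separate proof, only the remark that all four items follow directly from the definition of EB-composability and the mapping cone property. That is what you carry out: multilinearity of composition plus convexity of $\EB$ for (1), inserting $\id_d$ for (2) and for the ``only if'' half of (3), absorbing the CP factor into a neighbouring cone for the ``if'' half of (3), and the diagonal choice $\Phi_i=\Phi$ for (4).

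Your reading of ``EB-decomposable'' as EB-composable is the intended one. You could also add the one-line check that $\widetilde{\mathcal{C}}_{i+1,i}$ is again a mapping cone (absorb CP maps into the outer factors). Your $n=1$ aside in (3) is unnecessary, since reading the empty composition as $\id_d$ makes both sides false for $d\ge 2$.
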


In the following, we present new nontrivial EB-composable tuples by extending Theorem \ref{thm:Hen26}. We first introduce a cone of positive maps which enlarges $\SP_k$: for integer $k\geq 1$, set
    $$\DSP_k := \SP_k + \top\circ \SP_k=\{\Psi_1+\top \circ\Psi_2: \Psi_1,\Psi_2\in \SP_k\}\subseteq \PO.$$
Both summands are stable under two-sided CP comosition, so $\DSP_k$ is a mapping cone. By definition, we have $\DSP_1=\SP_1=\EB$ and $\DSP_d = \CP + \top \circ \CP = \DEC$. Furthermore, we have an inclusion chain
    $$\EB=\DSP_1\subseteq \DSP_2 \subseteq \cdots \subseteq \DSP_d = \DEC.$$
We actually have strict inclusions $\DSP_1\subsetneq \DSP_2$ and $\DSP_{d-1}\subsetneq \DSP_d$ as in the following examples.

\begin{example} \label{ex:DSPex}
Let $d\geq 2$.
\begin{enumerate}
    \item The identity map $\id_d\in \CP$ does not belong to $\DSP_k$ unless $k=d$. Indeed, it is known that $\Real_+\id_d$ is an extremal ray in the positivity cone $\PO(d,d)$ \cite{YH05}. Therefore, if $\id_d=\Psi_1 +  \top \circ \Psi_2$ for some $\Psi_1,\Psi_2\in \SP_k$, then either $\id_d\in \SP_k$ or $\top\in \SP_k$, and only the former case is possible with the case $k=d$.

    \item The \emph{reduction map} \cite{HH99}
        $$R(X):=\Tr(X)I_d - X, \quad X\in M_d,$$
    is a positive map which is not $2$-positive, as shown in \cite{Tom85}. However, the transposed map $\top \circ R:X\mapsto \Tr(X)I_d-X^{\top}$ is a CP map (often refered to as the \emph{Werner-Holevo channel} \cite{WH02} after normalization) and further $2$-superpositive (see, e.g., \cite{Kye23lect,PY24}). This shows that
        $$R\in \DSP_2\setminus \DSP_1.$$
    Note in particular that the class $\DSP_2\subset \PO$ is already nontrivial whenever $d\geq 3$, which contains a positive map which is not even 2-positive while $\id_d\notin \DSP_2$ and $\DSP_2\supset \SP_2$.  
\end{enumerate}
\end{example}

We now present the main result of this section.

\begin{theorem}\label{thm:DSP-bridge}
For any $d\geq 2$,
\begin{enumerate}
    \item the triple $(\PPT, \DSP_2, \PPT)$ is EB-composable;

    \item the tuple $(\PPT, \DSP_3, \PPT, \DSP_3, \PPT)$ is EB-composable.
\end{enumerate}
\end{theorem}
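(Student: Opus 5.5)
By linearity and convexity of $\EB$, it suffices to treat the generators of the middle cone: elementary maps $\Ad_K$ with $\rank K\leq k$, and $\top\circ\Ad_K$ with $\rank K\leq k$. Writing a general element of $\DSP_k$ as $\sum_\alpha \Ad_{K_\alpha}+\top\circ\sum_\beta\Ad_{L_\beta}$, the composition $\Phi_2\circ(\cdot)\circ\Phi_1$ splits into a sum of terms. Each term has the form $\Phi_2\circ\Ad_K\circ\Phi_1$ or $\Phi_2\circ\top\circ\Ad_L\circ\Phi_1$. Since $\Phi_2\in\PPT$ implies $\Phi_2\circ\top\in\PPT$ (because $C_{\Phi_2\circ\top}$ is a partial transpose of $C_{\Phi_2}$ on the input factor, and $\PPT$ is closed under it), the transposed terms reduce to the untransposed ones with $\Phi_2$ replaced by another PPT map. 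So the whole problem reduces to showing that $\Phi_2\circ\Ad_K\circ\Phi_1\in\EB$ for $\Phi_1,\Phi_2\in\PPT$ and $\rank K\leq 2$.

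\emph{Part (1).} Factor $K=WV^*$ with $V:\Comp^2\to\Comp^d$ and $W:\Comp^2\to\Comp^d$, which is possible since $\rank K\leq2$ (a polar/SVD factorization through $\Comp^2$). Then
\[
\Phi_2\circ\Ad_K\circ\Phi_1=(\Phi_2\circ\Ad_W)\circ(\Ad_{V^*}\circ\Phi_1).
\]
Here $\Ad_{V^*}\circ\Phi_1:M_d\to M_2$ and $\Phi_2\circ\Ad_W:M_2\to M_d$ are PPT by the mapping cone property \eqref{eq:MappingCone}. Theorem \ref{thm:Hen26} then gives that the composition is EB. Summing the terms gives (1). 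No serious obstacle arises here beyond checking the transpose bookkeeping, namely $\top\circ\Ad_L=\Ad_{\bar L}\circ\top$, so that the transpose can be absorbed into $\Phi_2$ with a rank-preserving change $L\mapsto\bar L$.

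\emph{Part (2).} By the same reduction, it suffices to show that
\[
\Phi_3\circ\Ad_{K_2}\circ\Phi_2\circ\Ad_{K_1}\circ\Phi_1\in\EB
\]
for PPT maps $\Phi_i$ and $\rank K_i\leq3$, with transposes absorbed into neighbouring PPT maps. Factor each $K_i$ through $\Comp^3$. The map then becomes $\Psi_3\circ\Psi_2\circ\Psi_1$, where $\Psi_1:M_d\to M_3$ and $\Psi_3:M_3\to M_d$ are PPT, and $\Psi_2=\Ad_{V_2^*}\circ\Phi_2\circ\Ad_{W_1}:M_3\to M_3$ is PPT. The key input is the qutrit PPT-squared result \cite{CMHW19,CYT19}, in its composable form. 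I would try the following. First, $\Psi_2\circ\Psi_1:M_d\to M_3$ is PPT with a qutrit output. Then I would use $\PPT(3,3)\subseteq\SP_2(3,3)$ from \cite{YLT16}/\cite{CYT19}, applied to $\Psi_2$: this shows $\Psi_2\in\SP_2(3,3)$, so $\Psi_2$ is a sum of $\Ad_J$ with $\rank J\leq2$. The composition becomes a sum of $\Psi_3\circ\Ad_J\circ\Psi_1$, and each such term is EB by the argument of part (1), i.e.\ by Theorem \ref{thm:Hen26} via factoring $J$ through $\Comp^2$.

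The main obstacle I expect is precisely the inclusion $\PPT(3,3)\subseteq\SP_2(3,3)$: it must be invoked for the square corner $\Psi_2$, and since the $\DSP_3$ slots also contribute transposes, I must make sure each transpose ends up in a PPT factor rather than producing a non-CP middle map. Because $\PPT$ is closed under composing with $\top$ on either side, this should be consistent.
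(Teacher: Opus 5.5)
Your proposal is correct and follows essentially the same route as the paper. You expand each $\DSP_k$ element into rank-$\le k$ Kraus terms and absorb the transposes into the neighbouring PPT maps; for (1) you factor through $\Comp^2$ and apply Theorem~\ref{thm:Hen26}, and for (2) you factor through $\Comp^3$, apply $\PPT(3,3)\subseteq\SP_2(3,3)$ to the middle qutrit map, and reduce to (1).
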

\begin{proof}
By Definition \ref{def:SP}, any positive map $\Lambda\in \DSP_k$ admits a Kraus-type representation
\begin{equation} \label{eq:DSPKraus}
    \Lambda = \sum_{\alpha} \Ad_{K_{\alpha}} + \sum_{\beta} \top \circ \Ad_{L_{\beta}}
\end{equation}
where all matrices $K_{\alpha},L_{\beta}\in M_d$ have rank at most $k$. In particular, this further implies that $K_{\alpha} = K_{l,\alpha}^*K_{r,\alpha}$ and $L_{\beta} = L_{l,\beta}^*L_{r,\beta}$ for some $k\times d$ rectangular matrices $K_{l,\alpha}, K_{r,\alpha}, L_{l,\beta}, L_{r,\beta}\in M_{k,d}$.

Now for $\Phi_1,\Phi_2\in \PPT$ and for $\Lambda\in \DSP_2$ having the representation \eqref{eq:DSPKraus} with $k=2$, we can write
    $$\Phi_2\circ \Lambda \circ \Phi_1 = \sum_{\alpha}(\Phi_2\circ \Ad_{K_{l,\alpha}^*})\circ (\Ad_{K_{r,\alpha}}\circ \Phi_1) + \sum_{\beta}(\Phi_2 \circ \top \circ \Ad_{L_{l,\beta}^*})\circ (\Ad_{L_{r,\beta}}\circ \Phi_2).$$
Since $\Phi_1$ and $\Phi_2$ are PPT, all the summands above are compositions of PPT maps factoring through $M_2$. Therefore, Theorem \ref{thm:Hen26} implies that $\Phi_2\circ \Lambda \circ \Phi_1\in \EB$, which shows (1).

For the proof of (2), let us take $\Phi_1,\Phi_2,\Phi_3\in \PPT$ and $\Lambda_1,\Lambda_2\in \DSP_3$ associated with the $3\times d$ matrices $\{K_{l,\alpha}^{(1)},K_{r,\alpha}^{(1)}, L_{l,\beta}^{(1)},L_{r,\beta}^{(1)}\}_{\alpha,\beta}$ and $\{K_{l,\alpha'}^{(2)},K_{r,\alpha'}^{(2)}, L_{l,\beta'}^{(2)},L_{r,\beta'}^{(2)}\}_{\alpha',\beta'}$, resp. We shall show that the composition
    $$\Phi_3\circ \Lambda_2 \circ \Phi_2 \circ \Lambda_1 \circ \Phi_1$$
is entanglement breaking. Indeed, let us expand the sums
\begin{align*}
    \Phi_3\circ \Lambda_2 \circ \Phi_2 \circ \Lambda_1 \circ \Phi_1 &= \sum_{\alpha,\alpha'} (\Phi_3\circ \Ad_{K_{l,\alpha'}^{*(2)}})\circ (\Ad_{K_{r,\alpha'}^{(2)}}\circ \Phi_2 \circ \Ad_{K_{l,\alpha}^{*(1)}}) \circ (\Ad_{K_{r,\alpha}^{(1)}} \circ \Phi_1)\\
    &+ \sum_{\alpha,\beta'} (\Phi_3 \circ \top \circ \Ad_{L_{l,\beta'}^{*(2)}})\circ (\Ad_{L_{r,\beta'}^{(2)}}\circ \Phi_2 \circ \Ad_{K_{l,\alpha}^{*(1)}}) \circ (\Ad_{K_{r,\alpha}^{(1)}} \circ \Phi_1) \\
    &+ \sum_{\beta,\alpha'} (\Phi_3\circ \Ad_{K_{l,\alpha'}^{*(2)}})\circ (\Ad_{K_{r,\alpha'}^{(2)}}\circ \Phi_2 \circ \top \circ \Ad_{L_{l,\beta}^{*(1)}}) \circ (\Ad_{L_{r,\beta}^{(1)}} \circ \Phi_1)\\
    &+ \sum_{\beta,\beta'} (\Phi_3\circ \top \circ \Ad_{L_{l,\beta'}^{*(2)}})\circ (\Ad_{L_{r,\beta'}^{(2)}}\circ \Phi_2 \circ \top \circ \Ad_{L_{l,\beta}^{*(1)}}) \circ (\Ad_{L_{r,\beta}^{(1)}} \circ \Phi_1).
\end{align*}
Then 
all four types of the middle linear maps
    $$\Ad_{K_{r,\alpha'}^{(2)}}\circ \Phi_2 \circ \Ad_{K_{l,\alpha}^{*(1)}}, \;\; \Ad_{L_{r,\beta'}^{(2)}}\circ \Phi_2 \circ \Ad_{K_{l,\alpha}^{*(1)}}, \;\; \Ad_{K_{r,\alpha'}^{(2)}}\circ \Phi_2 \circ \top \circ \Ad_{L_{l,\beta}^{*(1)}}, \;\; \Ad_{L_{r,\beta'}^{(2)}}\circ \Phi_2 \circ \top \circ \Ad_{L_{l,\beta}^{*(1)}}$$
are PPT maps on $M_3$ and hence $2$-superpositve by \cite{YLT16,CYT19}. Therefore, the assertion (1) now implies that every summand is entangelment-breaking, and thus the assertion (2) follows.
\end{proof}

\begin{remark} \label{rem:DSPgeneral}
Theorem \ref{thm:DSP-bridge} holds even in the setting of distinct dimensions. That is, for any dimensions $d_0,d_1,d_2,d_3\geq 2$, the triple
    $$(\PPT(d_3,d_2), \DSP_2(d_2,d_1), \PPT(d_1,d_0))$$
is EB-composable, and the second assertion admits an analogous generalization.
\end{remark}

\begin{proof} [\textbf{Proof of Theorem \ref{thm:intro-low-bridge}}]
This is a direct application of Theorem \ref{thm:DSP-bridge} combined with Proposition \ref{prop:DSPBasic} (4).
\end{proof}

Theorem \ref{thm:DSP-bridge} provides further nontrivial classes of EB-composable tuples by taking the two cones
    $$\mathcal{C}_{k,l}:={\rm conv}(\PPT\circ \DSP_k), \quad \mathcal{C}_{k,r}:={\rm conv}(\DSP_k\circ \PPT).$$
Let us first note that, for every $k=1,\ldots, d$,
\begin{align}
    \EB\subseteq \mathcal{C}_{k,l} &= {\rm conv}(\PPT\circ \SP_k)\subseteq \SP_k\cap \PPT, \label{eq:DSPComp1} \\
    \EB\subseteq\mathcal{C}_{k,r} &= {\rm conv}(\SP_k \circ \PPT)\subseteq \SP_k\cap \PPT. \label{eq:DSPComp2}
\end{align}
Indeed, if $\Phi\in \PPT$ and if $\Lambda = \Psi_1+ \top \circ \Psi_2\in \DSP_k$ with $\Psi_1,\Psi_2\in \SP_k$, the composition
    $$\Phi\circ \Lambda = \Phi \circ \Psi_1 + (\Phi\circ \top)\circ \Psi_2$$
clearly belongs to $\PPT \circ \SP_k$ and also to both $\PPT$ and $\SP_k$ by the mapping cone property \eqref{eq:MappingCone}. This shows the inclusions 
    $${\rm conv}(\PPT\circ \SP_k) \subseteq \mathcal{C}_{2,l} \subseteq {\rm conv}(\PPT\circ \SP_k) \subseteq \SP_k\cap \PPT,$$
and we in particular have $\mathcal{C}_{2,l} = {\rm conv}(\PPT\circ \SP_k)$. On the other hand, any entanglement breaking map $\Phi\in \EB = \SP_1$ can be written as 
    $$\Phi = \sum_{\alpha}\Ad_{K_{\alpha}}, \quad K_{\alpha} = |v_{\alpha}\ra\la w_{\alpha}|\in M_d,\;\; \|w_{\alpha}\|=1.$$
Then clearly $\Ad_{K_{\alpha}} = \Ad_{K_{\alpha}}\circ \Ad_{|w_{\alpha}\ra\la w_{\alpha}|}\in \EB\circ \EB \subseteq \PPT \circ \SP_k$, and hence $\Phi\in \mathcal{C}_{k,l}$. This shows $\EB\subseteq \mathcal{C}_{k,l}$, and the other relations \eqref{eq:DSPComp2} also follows analogously.

Now the simple applications of Eqs. \eqref{eq:DSPComp1}, \eqref{eq:DSPComp2}, Proposition \ref{prop:DSPBasic} (1), (4), and Theorem \ref{thm:DSP-bridge} imply the following.

\begin{corollary} \label{cor:EBComp}
Let $d\geq 2$. Then
\begin{enumerate}
    \item both two pairs $(\mathcal{C}_{2,l},\PPT)$ and $(\PPT, \mathcal{C}_{2,r})$ are EB-composable. In particular, the PPT-squared conjecture holds within the class $\mathcal{C}_{2,l}\cup \mathcal{C}_{2,r}$:
        $$\Phi_1,\Phi_2\in {\rm conv}(\mathcal{C}_{2,l}\cup \mathcal{C}_{2,r})\subseteq \SP_2\cap \PPT \implies \Phi_2 \circ \Phi_1\in \EB.$$

    \item all the triples $(\mathcal{C}_{3,l}, \mathcal{C}_{3,l}, \PPT)$, $(\mathcal{C}_{3,l}, \PPT, \mathcal{C}_{3,r})$, and $(\PPT, \mathcal{C}_{3,r},\mathcal{C}_{3,r})$
    are EB-composable.
\end{enumerate}
\end{corollary}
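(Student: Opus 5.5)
Both parts reduce to Theorem \ref{thm:DSP-bridge} together with the inclusions \eqref{eq:DSPComp1}--\eqref{eq:DSPComp2} and Proposition \ref{prop:DSPBasic}(1). Since EB is a convex cone and composition is multilinear, it suffices to check the generators. That is, we may take each element of $\mathcal{C}_{k,l}$ to be of the form $\Phi\circ\Lambda$, and each element of $\mathcal{C}_{k,r}$ of the form $\Lambda\circ\Phi$, with $\Phi\in\PPT$ and $\Lambda\in\DSP_k$.

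\textbf{Part (1).}
\begin{enumerate}
\item For $(\mathcal{C}_{2,l},\PPT)$, take $\Phi\circ\Lambda\in\mathcal{C}_{2,l}$ and $\Phi_1\in\PPT$. Then $(\Phi\circ\Lambda)\circ\Phi_1=\Phi\circ\Lambda\circ\Phi_1\in\PPT\circ\DSP_2\circ\PPT\subseteq\EB$ by Theorem \ref{thm:DSP-bridge}(1).
\item The pair $(\PPT,\mathcal{C}_{2,r})$ is handled symmetrically.
\item For the ``in particular'' statement, let $\Phi_1,\Phi_2\in{\rm conv}(\mathcal{C}_{2,l}\cup\mathcal{C}_{2,r})$. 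By bilinearity it suffices to treat $\Phi_1,\Phi_2$ each lying in $\mathcal{C}_{2,l}$ or in $\mathcal{C}_{2,r}$. By \eqref{eq:DSPComp1}--\eqref{eq:DSPComp2} each is PPT.
\item If $\Phi_2\in\mathcal{C}_{2,l}$, we are done by the pair $(\mathcal{C}_{2,l},\PPT)$ with $\Phi_1\in\PPT$.
\item If $\Phi_1\in\mathcal{C}_{2,r}$, we use the pair $(\PPT,\mathcal{C}_{2,r})$ instead.
\item The remaining case is $\Phi_2\in\mathcal{C}_{2,r}$ and $\Phi_1\in\mathcal{C}_{2,l}$. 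Then $\Phi_2\circ\Phi_1=\Lambda\circ\Phi\circ\Phi'\circ\Lambda'$. The middle map $\Phi\circ\Phi'$ is PPT. Each outer factor is a $\DSP_2$ map, not PPT, so Theorem \ref{thm:DSP-bridge} does not directly apply. However, $\Phi_2\circ\Phi_1=\Lambda\circ(\Phi\circ\Phi')\circ\Lambda'$, and $\Phi_2$ itself is PPT by \eqref{eq:DSPComp2}.
\end{enumerate}

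This last case is the main obstacle. Rather than the form $\Lambda\circ\Phi\circ\Phi'\circ\Lambda'$, I will regroup the composition as $\Phi_2\circ\Phi\circ\Lambda'$ with $\Phi_2\in\PPT$. Then $\Phi_2\circ\Phi\in\PPT$ by the mapping cone property, so $\Phi_2\circ\Phi\circ\Lambda'\in\PPT\circ\DSP_2$. This is still not PPT$\circ\DSP_2\circ$PPT, because no PPT factor remains on the right.

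The fix is to expand $\Lambda'$ through its rank-$2$ Kraus factorization $\Ad_{K_r}$, $\Ad_{K_l^*}$, as in the proof of Theorem \ref{thm:DSP-bridge}. Each summand then factors as a PPT map $M_2\to M_d$ followed after a CP map $\Ad_{K_r}\colon M_d\to M_2$. The decisive point is that any composition factoring through $M_2$ with both factors PPT is EB by Theorem \ref{thm:Hen26}. Here the right factor $\Ad_{K_r}$ is only CP, so Theorem \ref{thm:Hen26} cannot be used as is. Instead I use that $\Phi_2\circ\Phi\circ(\Ad_{K_l^*}$ or $\top\circ\Ad_{L_l^*})$ is a PPT map from $M_2$ to $M_d$. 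PPT maps with qubit input are EB because $2\otimes d$ PPT states are not generally separable, but their Choi matrices are $2\otimes d$ PPT. To close this I would rather symmetrize: write $\Phi_1=\Phi'\circ\Lambda'$ with $\Phi'\in\PPT$, and apply Theorem \ref{thm:DSP-bridge}(1) to $\Phi_2\circ\Lambda'\circ\Phi''$. For this I need $\Phi_1\in\mathcal{C}_{2,r}$ form as well. If this fails, I would restrict the ``in particular'' claim to the cases already covered.

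\textbf{Part (2).} Here the plan is a direct application. For $(\mathcal{C}_{3,l},\mathcal{C}_{3,l},\PPT)$, a product of generators is $\Phi\circ\Lambda_2\circ\Phi'\circ\Lambda_1\circ\Phi_1$. This lies in $\PPT\circ\DSP_3\circ\PPT\circ\DSP_3\circ\PPT\subseteq\EB$ by Theorem \ref{thm:DSP-bridge}(2). The triple $(\mathcal{C}_{3,l},\PPT,\mathcal{C}_{3,r})$ gives $\Phi\circ\Lambda_2\circ\Phi_2\circ\Lambda_1\circ\Phi'$, and $(\PPT,\mathcal{C}_{3,r},\mathcal{C}_{3,r})$ gives $\Phi_3\circ\Lambda_2\circ\Phi\circ\Lambda_1\circ\Phi'$. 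Both are EB by the same theorem. Convexity then finishes the argument.
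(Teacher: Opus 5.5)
Your handling of the two pairs in (1) and of the three tuples in (2) is correct and matches the paper's argument. Each is obtained by merging adjacent cones of the EB-composable tuples $(\PPT,\DSP_2,\PPT)$ and $(\PPT,\DSP_3,\PPT,\DSP_3,\PPT)$ from Theorem~\ref{thm:DSP-bridge}, as in Proposition~\ref{prop:DSPBasic}(1).

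The gap is in the ``in particular'' claim of (1), in the cross case $\Phi_2=\Lambda\circ\Phi\in\mathcal{C}_{2,r}$, $\Phi_1=\Phi'\circ\Lambda'\in\mathcal{C}_{2,l}$. You leave this case open, and neither of your fixes works:
\begin{itemize}
\item The assertion that PPT maps with qubit input are EB is false once $d\geq 4$. A $2\otimes 4$ PPT entangled state is the Choi matrix of a PPT, non-EB map from $M_2$ into $M_4$; the paper's example showing $\mathcal{C}_{2,l}\neq\mathcal{C}_{2,r}$ uses exactly this. So expanding only $\Lambda'$ while keeping $\Phi_2$ whole cannot succeed.
\item The symmetrization needs $\Phi_1\in\mathcal{C}_{2,r}$, which is not assumed and fails in general for $d\geq 4$.
\item Restricting the claim to the cases you already cover does not prove the statement as written.
\end{itemize}

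The missing idea is to expand both outer $\DSP_2$ factors simultaneously.
\begin{itemize}
\item Put $\Psi:=\Phi\circ\Phi'\in\PPT$. Use $\top\circ\Ad_K=\Ad_{\overline K}\circ\top$, together with the fact that $\top\circ\Psi$, $\Psi\circ\top$ and $\top\circ\Psi\circ\top$ are again PPT.
\item Then every summand of $\Lambda\circ\Psi\circ\Lambda'$ has the form $\Ad_{K}\circ\widetilde\Psi\circ\Ad_{M}$, with $\widetilde\Psi\in\PPT$ and $\rank K,\rank M\leq 2$.
\item The Choi matrix of such a summand is $(M^{\top}\otimes K)C_{\widetilde\Psi}(\overline{M}\otimes K^*)$. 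It is PPT by the mapping cone property, and it is supported on the product subspace ${\rm Ran}(M^{\top})\otimes{\rm Ran}(K)$ of size at most $2\otimes 2$.
\item Since PPT states on $2\otimes 2$ are separable, each summand is EB.
\end{itemize}
Thus $(\DSP_2,\PPT,\DSP_2)$ is EB-composable by an elementary argument; Theorem~\ref{thm:Hen26} is not even needed. This settles the cross term, and bilinearity together with your other three cases gives the claim.

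The paper's one-line proof, which cites only Theorem~\ref{thm:DSP-bridge}, Proposition~\ref{prop:DSPBasic} and the inclusions \eqref{eq:DSPComp1}--\eqref{eq:DSPComp2}, does not spell out this case either. You were right to flag it, but it has to be closed.
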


Since every map $\Phi\in {\rm conv}(\mathcal{C}_{2,l}\cup \mathcal{C}_{2,r})$ is $2$-superpositive, PPT-squared conjecture still remains open for maps with higher entanglement dimensionality. One may ask whether the cone ${\rm conv}(\mathcal{C}_{2,l}\cup \mathcal{C}_{2,r})$ covers the whole cone $\SP_2\cap \PPT$ which would imply that PPT-squared conjecture holds for all $2$-superpositive PPT maps. We verify below that this is not the case for $d\geq 3$. 

\begin{proposition} \label{prop:DSPChoi}
For $d\geq 2$ and $k=1,\ldots, d$, we have $\Phi\in \mathcal{C}_{k,l}$ if and only if $\Phi^*\in \mathcal{C}_{k,r}$. Furthermore, we have the Choi matrix characterizations
\begin{align} 
    \Phi \in \mathcal{C}_{k,l} \quad &\iff \quad C_{\Phi}\in {\rm conv}\{\rho\in \ppt(d\otimes d): \rank \big((\id_d\otimes \Tr)(\rho)\big)\leq k\}, \label{eq:DSPChoi1}\\
    \Phi \in \mathcal{C}_{k,r} \quad &\iff \quad C_{\Phi}\in {\rm conv}\{\rho\in \ppt(d\otimes d): \rank \big((\Tr\,\otimes\, \id_d)(\rho)\big) \leq k\}. \label{eq:DSPChoi2}
\end{align}
\end{proposition}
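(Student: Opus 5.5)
The plan is to first reduce everything to a statement about single compositions $\Psi\circ\Ad_K$, using the identification $\mathcal{C}_{k,l}={\rm conv}(\PPT\circ\SP_k)$ and $\mathcal{C}_{k,r}={\rm conv}(\SP_k\circ\PPT)$ established just before Corollary~\ref{cor:EBComp}. The duality claim is then immediate. Since $(\Psi\circ\Lambda)^*=\Lambda^*\circ\Psi^*$ and both $\PPT$ and $\SP_k$ are invariant under Hilbert--Schmidt adjoints, the adjoint operation maps $\PPT\circ\SP_k$ bijectively onto $\SP_k\circ\PPT$. Being linear, it also preserves convex hulls, so $\Phi\in\mathcal{C}_{k,l}\iff\Phi^*\in\mathcal{C}_{k,r}$.

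For \eqref{eq:DSPChoi1}, the basic computation is that the input marginal of the Choi matrix is
$$(\id_d\otimes\Tr)(C_\Phi)=\sum_{i,j}\Tr\Phi(|i\ra\la j|)\,|i\ra\la j| = \big(\Phi^*(I_d)\big)^{\top},$$
whose rank equals $\rank\Phi^*(I_d)$. The right-hand set in \eqref{eq:DSPChoi1} is therefore the convex hull of the Choi matrices of PPT maps $\Phi$ with $\rank\Phi^*(I_d)\le k$. It suffices to show that the generators coincide: the set $\PPT\circ\SP_k$ and the set $\{\Phi\in\PPT:\rank\Phi^*(I)\le k\}$ have the same convex hull.

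For the inclusion "$\subseteq$", take $\Psi\in\PPT$ and $\Lambda=\sum_\alpha\Ad_{K_\alpha}\in\SP_k$. Then $\Psi\circ\Lambda=\sum_\alpha\Psi\circ\Ad_{K_\alpha}$, and each summand is PPT by the mapping cone property \eqref{eq:MappingCone}. Moreover $(\Psi\circ\Ad_{K_\alpha})^*(I)=K_\alpha^*\Psi^*(I)K_\alpha$ has rank at most $\rank K_\alpha\le k$. So $\Psi\circ\Lambda$ is a sum, hence a positive multiple of a convex combination, of admissible generators.

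For the inclusion "$\supseteq$", let $\Phi$ be PPT with $\rank\Phi^*(I)\le k$, and let $P$ be the support projection of $\Phi^*(I)$. Writing $\Phi=\sum_i\Ad_{V_i}$ with $\sum_iV_i^*V_i=\Phi^*(I)$, any $v\in\ker P$ satisfies $\sum_i\|V_iv\|^2=0$, so $V_iP^\perp=0$ for all $i$ and hence $\Phi=\Phi\circ\Ad_P$. Since $\Ad_P\in\SP_k$ (because $\rank P\le k$) and $\Phi\in\PPT$, we get $\Phi\in\PPT\circ\SP_k$. Taking convex hulls proves \eqref{eq:DSPChoi1}.

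For \eqref{eq:DSPChoi2}, we have $(\Tr\otimes\id_d)(C_\Phi)=\Phi(I_d)=(\Phi^*)^*(I_d)$. Applying \eqref{eq:DSPChoi1} to $\Phi^*$ via the duality above, or simply repeating the argument on the output side using $\Phi=\Ad_Q\circ\Phi$ with $Q={\rm supp}\,\Phi(I)$, gives the result.

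The only point needing care is the support argument $\Phi=\Phi\circ\Ad_P$. This is a Kraus-level version of Lemma~\ref{lem:diagonal-support-propagation} and is routine; beyond that I expect no real obstacle, apart from keeping track of transposes, which do not affect ranks.
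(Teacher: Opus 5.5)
Your proposal is correct and follows essentially the same route as the paper's proof. Duality comes from adjoint-invariance of $\PPT$ and $\SP_k$. The forward inclusion comes from the rank bound on the input marginal under precomposition with $\Ad_K$, and the reverse inclusion from $\Phi=\Phi\circ\Ad_P$ with $P$ the support projection of that marginal; the only difference is that you write the marginal as $\Phi^*(I_d)^{\top}$ and check the support relation through Kraus operators, where the paper checks it directly on the Choi matrix.
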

\begin{proof}
The first assertion is clear from Eqs. \eqref{eq:DSPComp1}, \eqref{eq:DSPComp2}, and the fact that the cones $\SP_k$ and $\PPT$ are preserved under the adjoint $\Phi\mapsto \Phi^*$. On the other hand, note that
    $$C_{\Psi\circ \Ad_K} = (K^{\top}\otimes I_d)C_{\Psi}(\overline{K}\otimes I_d), \quad C_{\Ad_K\circ \Psi} = (I_d\otimes K)C_{\Psi}(I_d\otimes K^*)$$
for any linear map $\Psi$ and any matrix $K$. Then by taking partial trace operations, one has
\begin{align*}
    (\id_d\otimes \Tr)(C_{\Psi\circ \Ad_K}) = K^{\top}(\id_d\otimes \Tr)(C_{\Psi})\overline{K}, \quad (\Tr\otimes \id_d)(C_{\Ad_K\circ \Psi}) = K(\Tr\otimes \id_d)(C_{\Psi})K^*
\end{align*}
whose rank are both bounded above by $\rank K$. By convexity, this and the Choi--Jamio{\l}kowski correspondence imply the direction `$\Longrightarrow$' of Eqs. \eqref{eq:DSPChoi1} and \eqref{eq:DSPChoi2}. For the reverse direction `$\Longleftarrow$', suppose $\Phi$ is a PPT map such that $\rho_A:=\big((\id_d\otimes \Tr)(C_{\Phi})\big)$ satisfies $\rank \rho_A \leq k$. By taking the support projection $\Pi_{\rho_A}$ of $\rho_A$, we have $\rank \Pi_{\rho_A}\leq k$ and
    $$C_{\Phi} = (\Pi_{\rho_A}\otimes I_d)C_{\Phi}(\Pi_{\rho_A}\otimes I_d) = C_{\Phi\circ \Ad_{\Pi^{\top}_{\rho_A}}}.$$
Therefore, we have $\Phi = \Phi\circ \Ad_{\Pi^{\top}_{\rho_A}}\in \PPT \circ \SP_k$. The convexity again completes the proof of Eq. \eqref{eq:DSPComp1}. The other charaterization \eqref{eq:DSPComp2} follows similarly.
\end{proof}

\begin{example}
Here we provide comparisons between the cones 
    $$\EB(d,d)\subseteq \mathcal{C}_{2,l}, \,\mathcal{C}_{2,r}\subseteq{\rm conv}(\mathcal{C}_{2,l}\cup \mathcal{C}_{2,r}) \subseteq \SP_2(d,d) \cap \PPT(d,d).$$
\begin{enumerate}
    \item If $d=2$, $\SP_2\cap \PPT = \EB$ and hence all the cones above coincide.
    
    \item In \cite{CYT19,YLT16}, it is shown that every PPT map  on $M_3$ is $2$-superpositive, and hence $\SP_2(3,3)\cap \PPT(3,3) = \PPT(3,3)$. However, since $(\SP_2(3,3), \PPT(3,3))$ and $(\PPT(3,3), \SP_2(3,3))$ are already EB-composable \cite{CYT19}, Corollary \ref{cor:EBComp} (1) implies that
        $$\mathcal{C}_{2,l}(3,3) = \mathcal{C}_{2,r}(3,3) = \EB(3,3),$$
    and therefore,
        $${\rm conv}(\mathcal{C}_{2,l}(3,3)\cup \mathcal{C}_{2,r}(3,3)) = \EB(3,3)\subsetneq \PPT(3,3)=\SP_2(3,3)\cap \PPT(3,3).$$
    More generally, we have ${\rm conv}(\mathcal{C}_{2,l}(d,d)\cup \mathcal{C}_{2,r}(d,d))\subsetneq \SP_2(d,d)\cap \PPT(d,d)$ for all $d\geq 3$. Indeed, if we take any PPT non-EB map $\Phi\in \PPT(3,3)\setminus \EB(3,3)$, then its trivial embedding
        $$\widetilde{\Phi}=\Ad_{V}\circ \Phi \circ \Ad_{V^*}:M_d\to M_d,$$
    where $V=|1\ra\la 1|+|2\ra\la 2| + |3\ra\la 3|\in M_{d,3}$ is an isometry from $\Comp^3$ to $\Comp^d$, belongs to $\SP_2\cap \PPT$ while the fact $\Phi\notin {\rm conv}(\mathcal{C}_{2,l}(3,3)\cup \mathcal{C}_{2,r}(3,3))$ is again lifted to $\widetilde{\Phi}\notin {\rm conv}(\mathcal{C}_{2,l}(d,d)\cup \mathcal{C}_{2,r}(d,d))$.

    \item If $d\geq 4$, then $\EB\subsetneq \mathcal{C}_{2,l}, \mathcal{C}_{2,r}$ and $\mathcal{C}_{2,l}\neq \mathcal{C}_{2,r}$. Indeed, let $\rho\in (M_2\otimes M_4)^+$ be any $2\otimes 4$ PPT entangled state (see e.g. \cite{Wor76,Hor97}). Then we always have
        $$\rank (\id_2\otimes \Tr)(\rho)=2, \quad \rank(\Tr\otimes \id_4)(\rho)=4,$$
    since any bipartite state supported on a $2\otimes 3$ system is separable. Now if we take $\widetilde{\rho}\in (M_d\otimes M_d)^+$ by trivially embedding $\rho$ into the $d\otimes d$ system and if a linear map $\Phi:M_d\to M_d$ is defined by the condition $C_{\Phi}=\widetilde{\rho}$, then Proposition \ref{prop:DSPChoi} implies 
        $$\Phi\notin \EB, \quad \Phi\in \mathcal{C}_{2,l}, \quad \text{and} \quad \Phi\notin \mathcal{C}_{2,r}.$$
    Again by Proposition \ref{prop:DSPChoi}, we additionally have $\Phi^*\in \mathcal{C}_{2,r}\setminus \mathcal{C}_{2,l}$.
    
    Note in particular that, by Remark \ref{rem:DSPgeneral}, this provides nontrivial examples of PPT maps $\Phi\in \PPT\setminus \EB$ for which
    \begin{center}
        $\Phi\circ \Psi$ becomes entanglement breaking for any PPT maps $\Psi\in \PPT(d',d)$.
    \end{center}
\end{enumerate}
\end{example}

While ${\rm conv}(\mathcal{C}_{2,l}\cup \mathcal{C}_{2,r})$ is constrained in $\SP_2\cap \PPT$, the class $\DSP_2$ contains PPT maps with \emph{arbitrarily high degree of entanglement} as the dimension increases.

\begin{example} \label{ex:DSP-Symp}
For an even integer $d\geq 4$ and for a skew-symmetric unitary $V\in M_d$ (i.e., $V^{\top}=-V$ and $V^*V=I_d$), let us consider two-parameter linear maps
\begin{equation} \label{eq:SympCov}
    \Le_{p,q}(X):=\frac{1-p-q}{d}\Tr(X)I_d + pX + qVX^{\top} V^*, \quad X\in M_d.
\end{equation}
The family $\{\Le_{p,q}\}_{p,q\in \Real}$ is characterized by the TP maps with symplectic group covariance \cite{Par26}. Furthermore, the family contains the map
    $$\Phi_0=\Le_{\frac{1}{d+2},\frac{1}{d+2}}:X\mapsto \frac{1}{d+2}(\Tr(X)I_d+X+VX^{\top}V^*)$$
which is PPT and shown to be in the class $\SP_{d/2}\setminus \SP_{d/2-1}$ (\cite[Corollary 4.3]{Par26}). We claim that the cone $\DSP_2$ contains all PPT maps in the family $\{\Le_{p,q}\}$, i.e.,  $\Le_{p,q}\in \DSP_2$ whenever $\Le_{p,q}$ is PPT. First, it is shown that the set $\{\Le_{p,q}\}\cap \PPT$ is a convex compact set whose extremal points are precisely $\Phi_0$ and three other maps
    $$\Phi_1= \Le_{\frac{1}{d+1},0}, \quad \Phi_2= \Le_{0,\frac{1}{d+1}}, \quad \Phi_3=\Le_{-\frac{1}{d^2-d-2},-\frac{1}{d^2-d-2}},$$
which are entanglement breaking \cite[Theorem 4.1]{Par26}. Therefore, it suffices to verify $\Phi_0\in \DSP_2$. To this end, let us write $\Phi_0$ into the convex sum $\Phi_0 = \frac{1}{2}(\Le_{p_0,q_0}+\Le_{q_0,p_0})$ where
    $$p_0=\frac{2d-3}{(d+2)(d+1)},\quad q_0 = \frac{5}{(d+2)(d+1)},$$
so that $p_0+q_0=\frac{2}{d+2}$ and $\frac{d-3}{2d-3}p_0+q_0=\frac{1}{d+1}$. Then $(p_0, q_0)$ further satisfies
    $$-\frac{2d-1}{d+1}\leq -(d-1)p_0+q_0\leq \frac{1}{d+1}, \quad p_0-(d-1)q_0\leq 1,$$
and therefore, $\Le_{p_0,q_0}\in \SP_2$ by \cite[Theorem 4.2]{Par26}. On the other hand, since $V$ is a skew-symmetric unitary, one further has
    $$\top \circ \Le_{q_0,p_0} = \Ad_{V^*} \circ \Le_{p_0,q_0} \in \SP_2.$$
Thus we have $\Phi_0\in \DSP_2$.

We remark that the PPT-squared conjecture already holds within the class $\{\Le_{p,q}\}$ \cite{Pru25,Par26}:
    $$\Le_{p,q},\Le_{p',q'}\in \PPT \implies \Le_{p,q}\circ \Le_{p',q'}\in \EB.$$
Theorem \ref{thm:DSP-bridge} newly implies the following strong PPT-cubed assertion
    $$\Le_{p,q},\Psi_1,\Psi_2\in \PPT \implies \Psi_2\circ \Le_{p,q}\circ \Psi_1\in \EB,$$
where $\Psi_1$ and $\Psi_2$ are \emph{arbitrary} PPT maps.
\end{example}

As verified in Examples \ref{ex:DSPex} and \ref{ex:DSP-Symp}, the cone $\DSP_k$ with $k\geq 2$ contains a large family of positive maps and clearly $\PPT\subseteq \DSP_d= \DEC$. We leave open the following question.

\begin{question} \label{ques:DSP_PPT}
What is the smallest integer $k\leq d$ such that
    $$\DSP_k\supseteq \PPT?$$
\end{question}

Theorem \ref{thm:DSP-bridge} implies that the PPT-cubed conjecture would hold on $M_d$ whenever $\DSP_2(d,d)$ contains $\PPT(d,d)$. When $d=3$, it is known that \cite{YLT16} 
    $$\PPT(3,3)\subseteq  \SP_2\subseteq \DSP_2,$$
while it is unknown whether $\PPT(4,4)\subseteq\DSP_2(4,4)$. On the other hand, we can consider the duality between mapping cones: for a cone $\mathcal{C}\subseteq \PO(d_A,d_B)$, we define its \emph{dual cone}
    $$\mathcal{C}^{\circ}:=\{\Psi\in B^h(M_{d_A},M_{d_B}): \Tr(C_{\Psi}C_{\Phi})\geq 0\;\;\forall\,\Phi\in \mathcal{C}\}.$$
If $\mathcal{C}$ is a mapping cone, then so is $\mathcal{C}$. Furthermore, we have $\SP_k^{\circ} = \PO_k$ \cite{TH00,SSZ09}, and therefore,
    $$\DSP_k^{\circ} = \SP_k^{\circ} \cap (\top\circ \SP_k)^{\circ}  = \PO_k\cap (\top \circ \PO_k).$$
Thus, we have the following equivalent formulation of Question \ref{ques:DSP_PPT}.

\begin{proposition} \label{prop:DSP_PPT_equiv}
For all $d_A,d_B\geq 2$ and for each integer $1\leq k\leq \min(d_A,d_B)$, the following are equivalent:
\begin{enumerate}
    \item $\PPT(d_A,d_B)\subseteq \DSP_k(d_A,d_B)$;

    \item $\PO_k(d_A,d_B)\cap (\top\circ \PO_k(d_A,d_B))\subseteq \DEC(d_A,d_B)$. In other words, every linear map $\Phi:M_{d_A}\to M_{d_B}$ which is both $k$-positive and $k$-copositive is decomposable. 
\end{enumerate}
\end{proposition}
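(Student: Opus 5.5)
The plan is to dualize: the inclusion $\PPT\subseteq\DSP_k$ becomes a reverse inclusion of dual cones. Two ingredients are needed. The first is the bipolar theorem for closed convex cones in the real space $B^h(M_{d_A},M_{d_B})$ of Hermiticity-preserving maps, with the pairing $\langle\Psi,\Phi\rangle=\Tr(C_\Psi C_\Phi)$. This pairing is a real inner product, because Choi matrices of Hermiticity-preserving maps are Hermitian. The second is the identification of the relevant duals. We already have $\DSP_k^{\circ}=\PO_k\cap(\top\circ\PO_k)$, as recorded just before the statement. We also need $\PPT^{\circ}=\DEC$, which is the $k=\min(d_A,d_B)$ case of the same computation: $\PPT=\CP\cap(\top\circ\CP)$, and the dual of a sum of cones is the intersection of the duals. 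Concretely, $\CP^{\circ}=\CP$ since the PSD cone is self-dual. Moreover $(\top\circ\CP)^{\circ}=\top\circ\CP$, because $C_{\top\circ\Phi}=C_\Phi^{\Gamma}$ and partial transpose preserves the Hilbert--Schmidt pairing of Hermitian matrices. Hence $\DEC^{\circ}=\PPT$, and bipolarity then gives $\PPT^{\circ}=\DEC$.

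The main technical point, and the expected obstacle, is closedness, since bipolarity needs closed convex cones. The cones $\PPT$, $\PO_k$ and $\top\circ\PO_k$ are clearly closed. For $\DSP_k$ and $\DEC$, I would argue as follows. $\SP_k$ is closed because $\sch_k$ is the convex hull of a compact set of normalized rank-one projections, intersected with a trace slice, and so it is compact on each slice. A sum $\mathcal{C}_1+\mathcal{C}_2$ of closed cones is closed when there is no cancellation, i.e. when $c_1+c_2=0$ with $c_i\in\mathcal{C}_i$ forces $c_1=c_2=0$. Here, if $\Psi_1+\top\circ\Psi_2=0$ with $\Psi_1,\Psi_2$ CP, then taking the trace of the Choi matrices gives $\Tr C_{\Psi_1}+\Tr C_{\Psi_2}^{\Gamma}=0$. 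Both traces are nonnegative, since partial transpose preserves the trace, so both Choi matrices are zero. A standard compactness argument on normalized sequences then shows that $\DSP_k$ and $\DEC$ are closed.

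With these in hand the equivalence is formal. Suppose (1) holds, i.e. $\PPT\subseteq\DSP_k$. Taking duals reverses the inclusion, giving $\DSP_k^{\circ}\subseteq\PPT^{\circ}$, which is exactly $\PO_k\cap(\top\circ\PO_k)\subseteq\DEC$, i.e. (2). Conversely, suppose (2) holds. Dualizing gives $\DEC^{\circ}\subseteq(\DSP_k^{\circ})^{\circ}$. The left side is $\PPT$, and the right side is $\DSP_k$ by bipolarity and closedness, which yields (1). The phrasing ``$k$-positive and $k$-copositive'' in (2) is just the definition of $\top\circ\PO_k$: $\Phi\in\top\circ\PO_k$ iff $\top\circ\Phi\in\PO_k$.

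I would also check one compatibility issue. The duality $\SP_k^{\circ}=\PO_k$ from \cite{TH00,SSZ09} is stated in the convention $\Tr(C_\Psi C_\Phi)$, with no transposes or conjugations. This is consistent with the pairing used here, as $C_\Phi$ is Hermitian. All other steps are routine.
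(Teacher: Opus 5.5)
Your proposal is correct and follows the same duality argument the paper sketches just before the statement: compute $\DSP_k^{\circ}=\PO_k\cap(\top\circ\PO_k)$ and $\PPT^{\circ}=\DEC$, then dualize the inclusion in each direction. The paper leaves the bipolar step and the closedness of $\DSP_k$ and $\DEC$ implicit. Your no-cancellation argument, via the traces of the Choi matrices, correctly supplies them.
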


\noindent{\bf AI Statement.} The author acknowledges the use of AI tools, including ChatGPT, for language polishing, LaTeX editing, and exploratory mathematical discussions during the development and preparation of this manuscript. Some ideas used in the proof-development process in Section \ref{sec:splitting} arose during interactions with GPT-5.6 Sol and were independently verified, reformulated, and incorporated into the manuscript by the author. The author takes full responsibility for all mathematical content in the final manuscript.

\bigskip

\noindent{\bf Acknowledgments.} The author thanks Aabhas Gulati and Ion Nechita for the helpful discussions and comments. S.-J. Park is supported by the National Natural Science Foundation of China (grant No.~12401163) and the Department of Science and Technology of Hubei Province (Project No.2025EHA041, Project No.2025AFA044). S.-J. Park acknowledges the Hongyi Postdoc Fellowship from Wuhan University.

\bibliography{references}
\bibliographystyle{alpha}
\bigskip
\hrule
\bigskip

\end{document}